\documentclass{article}

\usepackage{arxiv}

\usepackage[utf8]{inputenc} % allow utf-8 input
\usepackage[T1]{fontenc}    % use 8-bit T1 fonts
\usepackage{hyperref}       % hyperlinks
\usepackage{url}            % simple URL typesetting
\usepackage{booktabs}       % professional-quality tables
\usepackage{amsfonts}       % blackboard math symbols
\usepackage{nicefrac}       % compact symbols for 1/2, etc.
\usepackage{microtype}      % microtypography
\usepackage{lipsum}
\usepackage{graphicx}
\graphicspath{ {./images/} }
\usepackage{amssymb}
\usepackage{amsmath}
\usepackage{lineno,hyperref}
\usepackage{latexsym}
\usepackage{amsthm}
\usepackage{amssymb}
\usepackage{amsmath,float,cancel}
\usepackage{ulem}
\normalem
\usepackage{tikz}
\usepackage{tkz-graph}
\usetikzlibrary{decorations,arrows,shapes}
\usepackage{enumerate}
\usepackage{comment}

\newtheorem{teo}{Theorem}[section]
\newtheorem{lem}[teo]{Lemma}

\newtheorem{obs}[teo]{Remark}
\newtheorem{prop}[teo]{Proposition}

\newtheorem{cor}[teo]{Corollary}
{\theoremstyle{definition}}

\title{On the distribution of $A_{\alpha}$-eigenvalues in terms of graph invariants}

\author{
 Uilton Cesar Peres Junior \\
  Centro Federal de Educação Tecnológica Celso Suckow da Fonseca - CEFET/RJ\\
  Rio de Janeiro, RJ \\
  \texttt{uilton.junior@aluno.cefet-rj.br} \\
  %% examples of more authors
   \And
 Carla Silva Oliveira \\
  Escola Nacional de Ciências Estatísticas - ENCE/IBGE\\
  Rio de Janeiro, RJ \\
  \texttt{carla.oliveira@ibge.gov.br} \\
  \And
 André Ebling Brondani \\
  Universidade Federal Fluminense - UFF\\
  Volta Redonda, RJ\\
  \texttt{andrebrondani@id.uff.br} \\
  %% \AND
  %% Coauthor \\
  %% Affiliation \\
  %% Address \\
  %% \texttt{email} \\
  %% \And
  %% Coauthor \\
  %% Affiliation \\
  %% Address \\
  %% \texttt{email} \\
  %% \And
  %% Coauthor \\
  %% Affiliation \\
  %% Address \\
  %% \texttt{email} \\
}

\begin{document}
\maketitle
\begin{abstract}
Let $G$ be a connected graph of order $n$, and $A(G)$ and $D(G)$ its adjacency and degree diagonal matrices, respectively. For a parameter $\alpha \in [0,1]$,
Nikiforov~(2017) introduced the convex combination $A_{\alpha}(G) = \alpha D(G) + (1 - \alpha)A(G)$. In this paper, we investigate the spectral distribution of $A_\alpha(G)$-eigenvalues, over subintervals of the real line. We establish lower and upper bounds on the number of such eigenvalues in terms of structural parameters of $G$, including the number of pendant and quasi-pendant vertices, the domination number, the matching number, and the edge covering number. Additionally, we exhibit families of graphs for which these bounds are attained. Several of our results extend known spectral bounds on the eigenvalue distributions of both the adjacency and the signless Laplacian matrices.
\end{abstract}

\keywords{Graph \and $A_{\alpha}$-matrix \and Distribution of eigenvalues}

\section{Introduction}

\vspace{-3mm}

Historically, the Spectral Graph Theory has played an essential role in mathematical chemistry. In particular, H{\"u}ckel Molecular Orbital (HMO) theory models the $\pi$-electron systems of conjugated hydrocarbons, where the spectrum of the molecular graph correlates with the compound's reactivity and stability~\cite{Huckel1931, Gutman2011}. Within this context, the nullity \( \eta(G) \), defined as the multiplicity of the zero eigenvalue of \( A(G) \), is interpreted as an indicator of chemical instability.

Previous studies have established bounds for the number of eigenvalues of Laplacian and signless Laplacian matrices in specific intervals based on graph parameters such as pendant and quasi-pendant vertices, domination number, and matching number, see~\cite{GroneMerrisSunder1990, CardosoJacobsTrevisan2017, GuoXue2024, Merris1991, MingWang2001, FanGongZhouTanWang2007, ZhouZhouDu2015, JacobsOliveiraTrevisan2021, AhanjidehAkbariFakharanTrevisan2022, XuZhou2024-2}.

In 2017, Nikiforov~\cite{Nikiforov2017} introduced the set of $A_{\alpha}$-matrices associated with a graph $G$ as a convex combination of the adjacency matrix, \( A(G) \), and the diagonal degree matrix, \( D(G) \) defined the following way: \( A_\alpha(G) = \alpha D(G) + (1 - \alpha)A(G) \), where \( \alpha \in [0,1]\). Through this set of matrices and motivated by the above mentioned works about the eigenvalue distribution, our main objective in this paper is to obtain bounds for the distribution of $A_{\alpha}$-eigenvalues in terms of graph invariants. In some cases, we identify graph families for which the bounds are attained, and we obtain results for the adjacency and signless Laplacian matrices. 

The paper is organized as follows. Section~\ref{sec:pre} presents preliminary results about Graph Theory and Spectral Graph Theory. In Section~\ref{sec:main1}, the results of the distribution of $A_{\alpha}$-eigenvalues in intervals involving the number of pendant and quasi-pendent vertices are presented. Section~\ref{sec:main2} explores additional bounds for the distribution of $A_{\alpha}$-eigenvalues in intervals based on the domination number, matching number, edge covering number, and other invariants of graphs.

\section{Preliminaries}\label{sec:pre}

\vspace{-2mm}

This section presents initial concepts and results from the literature to obtain the main contributions of this paper.

Let $G=(V(G), E(G))$ be a simple graph of order $n$ and size $m$. When no ambiguity arises, we simply write $G=(V, E)$. The degree of a vertex $v\in V(G)$ is denoted by $d_G(v)$ or simply $d(v)$. The \textit{degree sequence} of $G$ is the non-increasing sequence of the degrees of the vertices, given by $d(G) = (d(v_1), d(v_2), \dots, d(v_n))$. The minimum and maximum degrees of $G$ are denoted by $\delta(G)=d(v_n)$ and $\Delta(G)=d(v_1)$, respectively. When there is no doubt, we use, simply, $\delta$ and $\Delta$. A vertex is \textit{isolated} if $d(v) = 0$, pendant if $d(v) = 1$, and quasi-pendant if it is adjacent to a pendant vertex. A vertex $v$ is called \textit{internal} if $d(v) \ge 2$. The neighborhood of a vertex $v$, denoted $N_G(v)$, is the set of all vertices adjacent to $v$ and $|N_G(v)|=d(v)$. 

A \textit{path} in $G$ is a sequence of vertices $(v_1, v_2, \ldots, v_r)$ such that each pair of consecutive vertices $(v_{i-1},v_i)$ for $1 \leq i \leq r$ is connected by an edge, and no vertex or edge is visited more than once. A \textit{cycle} is a path where all vertices and edges are distinct, except for the first and last vertices, which are the same. A \textit{Hamiltonian path} and a \textit{Hamiltonian cycle} are a path and a cycle, respectively, that contains all the vertices of a graph $G$. A graph $G$ is said to be \textit{connected} if there is a path between every pair of its vertices. A connected component of $G$ is a maximal set of vertices $P \subset V(G)$ such that for each $u \in P$ and $v \in P$, there exists a path in $G$ from vertex $u$ to vertex $v$. The \textit{complement} of a graph $G=(V(G),E(G))$ is the graph $\overline{G}=(V(G), E(\overline{G})$ where $\{u, v\} \in E(\overline{G})$ if and only if $\{u, v\} \notin E(G)$.

A graph is \text{bipartite} if its vertex set can be partitioned into disjoint sets, $V_1$ and $V_2$, such that all edges join vertices from different parts. The complete bipartite graph is denoted by $K_{p,q}$. A graph is called \textit{$r$-regular}  if each of its vertices has degree $r$. The \textit{complete graph, path, cycle} and \textit{star}  of order $n$ are denoted by $K_{n}$, $P_{n}$, $C_{n}$ and $S_{n}$, respectively. A \textit{forest} is a graph without cycles, and a \textit{tree} is a connected forest. A \textit{spanning forest} of $G$ is a forest whose vertex set coincides with the set of vertices of $G$.

A \textit{dominating set} in $G$ is a subset $C$ of its vertices, such that any vertex of $G$ is in $C$, or has a neighbor in $C$. The domination number, $\gamma(G)$, is the number of vertices in a smallest dominating set for $G$. A \textit{matching} in $G$ is a set of edges no two of which share a common vertex. A \textit{maximum matching} is a matching that contains the largest possible number of edges. The matching number $\nu (G)$ of a graph $G$ is the size of a maximum matching. An \textit{edge covering} is a set of edges such that every vertex is incident to at least one edge in the set, and the minimum size of such a set is the \textit{edge covering number},  $\beta(G)$. 
 
The next result establish a relation between the order, the matching number and the edge covering number of a graph.

\begin{lem}[\cite{Galli1959}]\label{Galli1959}
If $G$ is a graph of order $n$ with no isolated vertices, then $\beta(G) + \nu(G) = n$.
\end{lem}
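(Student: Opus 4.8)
The plan is to establish the two inequalities $\beta(G) + \nu(G) \le n$ and $\beta(G) + \nu(G) \ge n$ separately; together they force the desired equality.

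For the first inequality, I would start from a \emph{maximum} matching $M$, so that $|M| = \nu(G)$ and $M$ saturates exactly $2\nu(G)$ vertices. The remaining $n - 2\nu(G)$ vertices are unsaturated, and since $G$ has no isolated vertices each of them is incident to at least one edge. Adjoining one such incident edge for every unsaturated vertex to $M$ produces an edge set covering all of $V(G)$, hence an edge cover, whose size is at most $\nu(G) + (n - 2\nu(G)) = n - \nu(G)$. Therefore $\beta(G) \le n - \nu(G)$, which is the first inequality.

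For the reverse inequality, I would begin from a \emph{minimum} edge cover $C$ and analyze the spanning subgraph $H = (V(G), C)$. The key structural fact is that every connected component of $H$ is a star. This follows from minimality: if some component contained a cycle, deleting any cycle edge would leave every vertex covered; and if some component had diameter at least three it would contain a path $v_1 v_2 v_3 v_4$, whose middle edge $v_2 v_3$ could be deleted while $v_2$ stays covered by $v_1 v_2$ and $v_3$ by $v_3 v_4$ — either case contradicts the minimality of $|C|$. Thus $H$ is a disjoint union of stars, in particular a forest; if it has $c$ components then $|C| = n - c$. Selecting one edge from each star component yields a matching (the chosen edges lie in distinct components and so are pairwise disjoint) of size $c = n - |C| = n - \beta(G)$, whence $\nu(G) \ge n - \beta(G)$.

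I expect the main obstacle to be the structural claim that a minimum edge cover decomposes into stars, together with the accompanying edge count $|C| = n - c$ for a forest on $n$ vertices with $c$ components; once these are in place, extracting the matching and combining the two inequalities is routine.
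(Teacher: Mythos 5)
Your proof is correct. Note that the paper itself gives no proof of this statement: it is quoted as a known result (Gallai, 1959), so there is no internal argument to compare against. Your two-inequality strategy --- extending a maximum matching by one incident edge per unsaturated vertex to get $\beta(G) \le n - \nu(G)$, and showing that a minimum edge cover induces a spanning star forest with $c = n - \beta(G)$ components from which a matching of size $c$ can be extracted --- is the classical proof of Gallai's identity, and every step checks out, including the deletion arguments (cycle edge, middle edge of a diameter-$3$ path) that force the star structure and the edge count $|C| = n - c$ for a spanning forest with $c$ components.
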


The next lemma establishes a relation between $\gamma(G)$ and the existence of a spanning forest of $G$ composed of $\gamma(G)$ stars.

\begin{lem}[\cite{CardosoJacobsTrevisan2017}]\label{domtrelem6}
Every graph $G = (V, E)$ that contains no isolated vertices, with domination number $\gamma(G)$, has a star forest $F = S_{n_1} \cup \dots \cup S_{n_{\gamma(G)}}$ as its spanning forest, such that each $v \in V$ belongs to exactly one star, and the star centers form a minimal dominating set of $G$.
\end{lem}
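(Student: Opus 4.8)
The plan is to start from a minimum dominating set and convert it into the required star forest by a careful assignment of the remaining vertices to centers, invoking Hall's theorem to guarantee that no star degenerates into an isolated vertex. First I would fix a dominating set $C = \{c_1, \dots, c_{\gamma(G)}\}$ of minimum size $\gamma(G)$; being minimum, $C$ is automatically minimal, which already secures the stated property of the centers. The objective is then to partition $V \setminus C$ among the $c_i$ along edges of $G$ so that each $c_i$ receives at least one neighbour, after which declaring the edges from each center to its assigned vertices produces the components $S_{n_1}, \dots, S_{n_{\gamma(G)}}$.

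Before constructing the assignment I would record one structural fact: every center $c \in C$ has a neighbour outside $C$. Indeed, if $N_G(c) \subseteq C$, then since $G$ has no isolated vertices $c$ has some neighbour $c' \in C$, and one checks that $C \setminus \{c\}$ would still dominate $G$ — each vertex whose only dominator was $c$ lies in $N_G(c) \subseteq C$ and so dominates itself, while $c$ is dominated by $c'$ — contradicting the minimality of $C$.

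The heart of the argument, and the step I expect to be the main obstacle, is showing that the bipartite graph $H$ on parts $C$ and $V \setminus C$, with the edges inherited from $G$, admits a matching saturating $C$. I would verify Hall's condition: for $S \subseteq C$ put $T = N_H(S) \subseteq V \setminus C$, and suppose $|T| < |S|$. Then $C' = (C \setminus S) \cup T$ is a dominating set of size strictly smaller than $\gamma(G)$, since every vertex outside $C$ retains a dominator in $C \setminus S$ (otherwise it would lie in $T$), and each $c \in S$ is dominated by an outside neighbour, which belongs to $T$ by the previous paragraph. This contradicts the definition of $\gamma(G)$, so Hall's condition holds and a matching $M$ saturating $C$ exists.

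Finally I would extend $M$ to a full assignment: every unmatched vertex of $V \setminus C$ has a neighbour in $C$ because $C$ dominates $G$, so it can be attached to one such center, while the matched vertices ensure that each $c_i$ keeps at least one leaf. The resulting subgraph $F$, whose edges join each center to the vertices assigned to it, spans $V$, contains no cycle, and has exactly $\gamma(G)$ components, each a star $S_{n_i}$ with $n_i \ge 2$ centered at $c_i$. Since $C$ is a minimal dominating set whose elements are precisely these centers, $F$ is the claimed star forest.
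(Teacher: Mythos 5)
Your proof is correct; note, however, that the paper does not prove Lemma~\ref{domtrelem6} at all --- it is imported with a citation from \cite{CardosoJacobsTrevisan2017} --- so the comparison below is with the most elementary argument that establishes the statement as written. That argument is much shorter than yours: fix a minimum dominating set $C=\{c_1,\dots,c_{\gamma(G)}\}$ and assign each vertex of $V\setminus C$ to one of its neighbours in $C$ (such a neighbour exists because $C$ dominates); the edges from each centre to its assigned vertices already form a spanning forest of $\gamma(G)$ vertex-disjoint stars whose centres are exactly $C$, which is minimal because it is minimum. What this short route does not guarantee is that every centre receives a leaf: some stars may degenerate to a single vertex $S_1$, which the wording of the lemma (and its only use, in Theorem~\ref{teo:hedtreuil}, where an isolated vertex still contributes exactly one eigenvalue, namely $0$, to $[0,\alpha)$) arguably tolerates. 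Your Hall's-theorem step is precisely what upgrades the conclusion to $n_i\ge 2$ for every star --- plausibly the version intended by the hypothesis ``no isolated vertices''. Both of your key steps check out: the Ore-type observation that every vertex of a minimum dominating set has a neighbour outside it, and the verification of Hall's condition by exhibiting the smaller dominating set $(C\setminus S)\cup N_H(S)$ when $|N_H(S)|<|S|$. The extra machinery does real work for that stronger conclusion: a greedy reassignment of leaves to leafless centres can cycle, so one cannot simply patch the elementary argument without something equivalent to Hall's theorem. In short, your proof is sound, takes a genuinely different (and stronger) route than needed for the statement, at the cost of invoking Hall's theorem where the literal statement can be had by a one-paragraph assignment argument.
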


Let $G = (V_1, E_1)$ and $H = (V_2,E_2)$ be two graphs of orders $n_1$ and $n_2$, respectively, and let $e$ be an edge in $G$. The \textit{union} of $G$ and $H$ is the graph $G \cup H = (V_1 \cup V_2, E_1 \cup E_2)$. The graph $kG$, where $k$ is a positive integer, is the graph obtained from the union of $k$ copies of the graph $G$. Given the graphs $G, H_1, H_2, \dots, H_n$, where $G$ is a graph of order $n$, the \textit{generalized corona} \cite{LaaliJavadiKiani2016}, denoted by $G \tilde{\circ}\Lambda_{i=1}^{n}H_i$, is the graph obtained by taking a copy of each of the graphs $G, H_1, H_2, \dots, H_n$ and joining the $i$-th vertex of $G$ to all the vertices of $H_i$, for $1 \leq i \leq n$. When $H_1 \cong H_2 \cong \cdots \cong H_n$, the generalized corona is denoted by $G \circ H_1$, and is simply called the \textit{corona} \cite{FruchtHarary1970}. Moreover, if $G$ is a graph consisting of a single isolated vertex and $H \cong nK_1$, then $G \circ H \cong S_n$. Consider $H_1 \cong K_1$, $H_2 \cong 2K_1$, $H_3 \cong 3K_1$, and $H_4 \cong P_2$. Figure~\ref{excorona} shows examples of corona and a generalized corona graph.
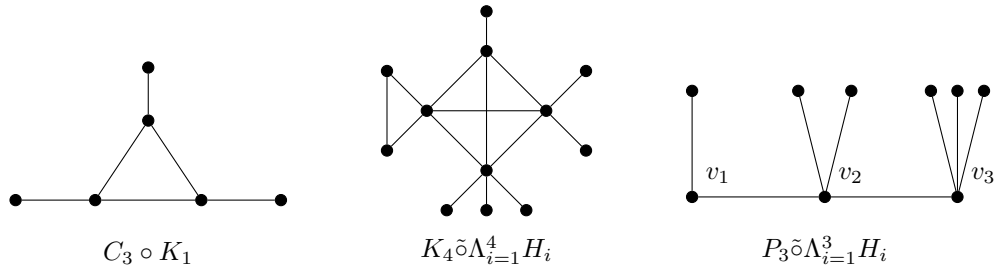
\begin{figure}[H]
    \centering
    \begin{tikzpicture}[>=latex',join=bevel,scale=2pt]
	\tikzstyle{selected edge} = [draw,line width=1pt,->,red!30]
	\node (1) at (20bp,-10bp) [draw,circle,inner sep=1.5pt,fill=black!100,label=270:{{}}] {};
	\node (2) at (30bp,5bp) [draw,circle,inner sep=1.5pt,fill=black!100,label=0:{{}}] {};
	\node (3) at (40bp,-10bp)  [draw,circle,inner sep=1.5pt,fill=black!100,label=270:{{}}] {};
	\node (4) at (5bp,-10bp) [draw,circle,inner sep=1.5pt,fill=black!100,label=270:{{}}] {};
    \node (5) at (30bp,15bp) [draw,circle,inner sep=1.5pt,fill=black!100,label=90:{{}}] {};
    \node (6) at (55bp,-10bp) [draw,circle,inner sep=1.5pt,fill=black!100,label=90:{{}}] {};
    
	\node (X) at (30bp,-20bp)[label=90:{{}}] {$C_3 \circ K_1$};
	%++++++++arestas++++	

    \draw [] (1) -- node {} (2);
    \draw [] (2) -- node {} (3);
    \draw [] (3) -- node {} (1);
    \draw [] (1) -- node {} (4);
    \draw [] (2) -- node {} (5);
    \draw [] (3) -- node {} (6);

	\end{tikzpicture}
    \hspace{1cm}
    \begin{tikzpicture}[>=latex',join=bevel,scale=1.5pt]
	\tikzstyle{selected edge} = [draw,line width=1pt,->,red!30]
	\node (2) at (15bp,-10bp) [draw,circle,inner sep=1.5pt,fill=black!100,label=270:{{}}] {};
	\node (3) at (30bp,5bp) [draw,circle,inner sep=1.5pt,fill=black!100,label=0:{{}}] {};
	\node (4) at (30bp,-25bp) [draw,circle,inner sep=1.5pt,fill=black!100,label=270:{{}}] {};
	\node (5) at (45bp,-10bp) [draw,circle,inner sep=1.5pt,fill=black!100,label=270:{{}}] {};
    \node (1) at (30bp,15bp) [draw,circle,inner sep=1.5pt,fill=black!100,label=0:{{}}] {};
    \node (7) at (55bp,-0bp) [draw,circle,inner sep=1.5pt,fill=black!100,label=270:{{}}] {};
    \node (8) at (55bp,-20bp) [draw,circle,inner sep=1.5pt,fill=black!100,label=270:{{}}] {};
    \node (9) at (5bp, 0bp) [draw,circle,inner sep=1.5pt,fill=black!100,label=270:{{}}] {};
    \node (10) at (5bp,-20bp) [draw,circle,inner sep=1.5pt,fill=black!100,label=270:{{}}] {};
    \node (11) at (20bp,-35bp) [draw,circle,inner sep=1.5pt,fill=black!100,label=270:{{}}] {};
    \node (12) at (40bp,-35bp) [draw,circle,inner sep=1.5pt,fill=black!100,label=270:{{}}] {};
    \node (13) at (30bp,-35bp) [draw,circle,inner sep=1.5pt,fill=black!100,label=270:{{}}] {};
    
	\node (X) at (30bp,-45bp)[label=90:{{}}] {$K_4 \tilde{\circ}\Lambda_{i=1}^{4}H_i$};
	%++++++++arestas++++	

    \draw [] (2) -- node {} (3);
    \draw [] (2) -- node {} (4);
    \draw [] (2) -- node {} (5);
    \draw [] (3) -- node {} (4);
    \draw [] (3) -- node {} (5);
    \draw [] (4) -- node {} (5);
    \draw [] (3) -- node {} (1);
    \draw [] (5) -- node {} (7);
    \draw [] (5) -- node {} (8);
    \draw [] (2) -- node {} (9);
    \draw [] (2) -- node {} (10);
    \draw [] (9) -- node {} (10);
    \draw [] (4) -- node {} (11);
    \draw [] (4) -- node {} (12);
    \draw [] (4) -- node {} (13);
    
	\end{tikzpicture}
    \hspace{1cm}
    \begin{tikzpicture}[>=latex',join=bevel,scale=2pt]
	\tikzstyle{selected edge} = [draw,line width=1pt,->,red!30]
    \node (1) at (30bp,-10bp) [draw,circle,inner sep=1.5pt,fill=black!100,label=45:{{$v_2$}}] {};
	\node (2) at (5bp,-10bp) [draw,circle,inner sep=1.5pt,fill=black!100,label=45:{{$v_1$}}] {};
    \node (3) at (25bp,10bp) [draw,circle,inner sep=1.5pt,fill=black!100,label=90:{{}}] {};
    \node (4) at (55bp,-10bp) [draw,circle,inner sep=1.5pt,fill=black!100,label=45:{{$v_3$}}] {};
    \node (5) at (5bp,10bp) [draw,circle,inner sep=1.5pt,fill=black!100,label=270:{{}}] {};
    \node (6) at (55bp,10bp) [draw,circle,inner sep=1.5pt,fill=black!100,label=90:{{}}] {};
    \node (7) at (35bp,10bp) [draw,circle,inner sep=1.5pt,fill=black!100,label=90:{{}}] {};
    \node (8) at (50bp,10bp) [draw,circle,inner sep=1.5pt,fill=black!100,label=90:{{}}] {};
    \node (9) at (60bp,10bp) [draw,circle,inner sep=1.5pt,fill=black!100,label=90:{{}}] {};
    
	\node (X) at (30bp,-20bp)[label=90:{{}}] {$P_3 \tilde{\circ}\Lambda_{i=1}^{3}H_i$};
	%++++++++arestas++++	

    \draw [] (1) -- node {} (2);
    \draw [] (1) -- node {} (3);
    \draw [] (1) -- node {} (7);
    \draw [] (1) -- node {} (4);
    \draw [] (2) -- node {} (5);
    \draw [] (4) -- node {} (6);
    \draw [] (4) -- node {} (8);
    \draw [] (4) -- node {} (9);

	\end{tikzpicture}
    \caption{Examples of corana and generalized corona graphs.}
    \label{excorona}
\end{figure}

Let $M$ be a square matrix of order $n$. The $M$-characteristic polynomial is defined by $p_M(x) =  det (x I_n - M) $
and the roots of $p_M(x)$ are called the $M$-eigenvalues, where $I_n$ is the identity matrix of order $n$. For $M$ symmetric, its eigenvalues $\lambda_j(M)$, $1 \leq j \leq n$, are arranged in non-increasing order, that is, $  \lambda_{1}(M) \geq \ldots \geq \lambda_{n-1}(M) \geq \lambda_{n}(M).$ The \textit{M-spectrum} is the multiset of its eigenvalues, denoted by $\sigma(M)=\{\lambda_{1}(M), \lambda_{2}(M), \dots, \lambda_{n}(M)\}$.  We denote by $ m_{M}(I)$ the number of eigenvalues of $M$  in the  interval $I$ included its multiplicities and write $m_M(I)=m_M(a)$, when $I=[a,a]$. In particular, the eigenvalues of a matrix $M$ associated with a graph $G$ are referred to as the $M(G)$-eigenvalues, and the $M(G)$-characteristic polynomial is denoted by $p_{M}(G,x).$

The adjacency matrix of a graph $G$ is defined by $A(G) = [a_{ij}]$, where $a_{ij} = 1$ if $\{v_i, v_j\} \in E$ and $a_{ij} = 0$, otherwise. The diagonal matrix of degrees of $G$, $D(G) = [d_{ij}]$, is defined by $d_{ii} = d(v_i)$, and $d_{ij} = 0$, $\forall i \neq j$, the Laplacian matrix of $G$, $L(G)$, is defined as $L(G) = D(G) - A(G)$ and the signless Laplacian, denoted by $Q(G)$, is defined as $Q(G) = D(G) + A(G)$. 

In 2017, Nikiforov \cite{Nikiforov2017} defined the set of matrices $A_{\alpha}(G)$ as a convex linear combination of $A(G)$ and $D(G)$ in the following way
$$A_{\alpha}(G) = \alpha D(G) + (1-\alpha)A(G), \mbox{ for } \alpha \in [0,1].$$

This set of matrices contains some of the classical matrices associated with a graph:  \( A_0(G) = A(G) \), \( A_1(G) = D(G) \), and \( A_{1/2}(G) = \frac{1}{2}Q(G) \), where $Q(G)$ is the signless Laplacian matrix of $G$.
Although \( A_\alpha(G) \) does not contain the Laplacian matrix, the following relationship holds: \( A_\alpha(G) - A_\beta(G) = (\alpha - \beta)L(G) \) for all $\alpha, \beta \in [0,1].$

The following interlacing result can be found in \cite{Haemers1995}.

\begin{teo} [\cite{Haemers1995}]\label{interlacing:matrizes}
Let $M$ be a symmetric matrix of order $n$, and let $B$ be a principal submatrix of $M$ of order $r<n$. Then, for all $i=1,2, \dots, r$,
\begin{equation*}
 \lambda_{i}(M) \geq \lambda_{i}(B) \geq \lambda_{n-r+i}(M).   
\end{equation*}
\end{teo}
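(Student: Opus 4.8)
The plan is to derive both inequalities from the Courant--Fischer min--max characterization of the eigenvalues of a symmetric matrix, which is the standard engine behind interlacing results. First I would normalize the setup: since $B$ is a principal submatrix of $M$, there is a permutation matrix $P$ for which $P^{\top} M P$ carries $B$ in its leading $r \times r$ block, and permutation similarity preserves the spectrum, so I may assume $B$ is the leading principal submatrix of $M$. Then $U = \{x \in \mathbb{R}^n : x_{r+1} = \cdots = x_n = 0\}$ is an $r$-dimensional subspace on which $x^{\top} M x = y^{\top} B y$ and $x^{\top} x = y^{\top} y$, where $y \in \mathbb{R}^r$ collects the first $r$ coordinates of $x$. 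Consequently the Rayleigh quotient of $M$ restricted to $U$ agrees with that of $B$ on $\mathbb{R}^r$, and embedding $\mathbb{R}^r$ into $U$ is a dimension-preserving bijection between the subspaces of $\mathbb{R}^r$ and the subspaces of $\mathbb{R}^n$ lying inside $U$.

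For the upper bound $\lambda_i(M) \geq \lambda_i(B)$ I would use the max--min form
\[
\lambda_i(B) = \max_{\dim S = i}\ \min_{0 \neq y \in S}\ \frac{y^{\top} B y}{y^{\top} y},
\]
with $S$ ranging over $i$-dimensional subspaces of $\mathbb{R}^r$. Transporting each $S$ to its image in $U$ leaves the inner minimum unchanged, so $\lambda_i(B)$ equals the same maximum restricted to $i$-dimensional subspaces contained in $U$. Dropping that containment constraint enlarges the feasible family and can only increase the maximum, and the unconstrained quantity is exactly $\lambda_i(M)$; hence $\lambda_i(B) \leq \lambda_i(M)$.

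For the lower bound $\lambda_i(B) \geq \lambda_{n-r+i}(M)$ I would invoke the complementary min--max form
\[
\lambda_i(B) = \min_{\dim S = r-i+1}\ \max_{0 \neq y \in S}\ \frac{y^{\top} B y}{y^{\top} y},
\]
again transporting $S$ into $U$. Relaxing the constraint from ``$S \subseteq U$'' to ``$S$ an arbitrary $(r-i+1)$-dimensional subspace of $\mathbb{R}^n$'' can only decrease the minimum, and the unconstrained min--max over $(r-i+1)$-dimensional subspaces is $\lambda_{n-(r-i+1)+1}(M) = \lambda_{n-r+i}(M)$. Combining the two bounds gives the statement for every $i = 1, \dots, r$. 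The one genuinely delicate point is the index bookkeeping: I must align the ``depth'' $i$ in $B$ with the correct co-dimension so that the min--max over $(r-i+1)$-planes in $\mathbb{R}^n$ lands precisely on index $n-r+i$, and check that the two Courant--Fischer forms I invoke are written consistently with the non-increasing ordering of eigenvalues fixed in the preliminaries.
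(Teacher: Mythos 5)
Your proof is correct: both Courant--Fischer forms are stated consistently with the non-increasing ordering used in the paper, the transport of subspaces into $U=\{x\in\mathbb{R}^n: x_{r+1}=\cdots=x_n=0\}$ is legitimate because the Rayleigh quotients of $M$ and $B$ agree there, and the index arithmetic $n-(r-i+1)+1=n-r+i$ lands exactly on the claimed inequality. Note that the paper itself gives no proof of this statement---it is Cauchy interlacing, imported directly from the cited reference \cite{Haemers1995}---and your argument is the standard one underlying that reference, so there is nothing to reconcile beyond confirming its correctness.
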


The next results involve bounds for the $A_{\alpha}(G)$-eigenvalues.

\begin{lem}[\cite{Nikiforov2017}]\label{cotaniki}
Let $G$ be a graph of order $n$. If $\alpha \in [0,1]$ and $1 \leq k \leq n$, then
  $$\lambda_k(A_{\alpha}(G)) \leq d(v_k).$$
  In particular, $\lambda_1(A_{\alpha}(G)) \leq \Delta(G).$
\end{lem}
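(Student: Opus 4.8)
The plan is to compare $A_{\alpha}(G)$ with the diagonal degree matrix $D(G)$ in the positive semidefinite (Loewner) order and then pass to an eigenvalue inequality. The starting point is the identity already recorded in the preliminaries: $A_{\alpha}(G) - A_{\beta}(G) = (\alpha - \beta)L(G)$ for all $\alpha,\beta \in [0,1]$. Taking $\beta = 1$ and using $A_1(G) = D(G)$, this gives
\[
D(G) - A_{\alpha}(G) = (1-\alpha)L(G).
\]
First I would recall that $L(G)$ is positive semidefinite; concretely, its quadratic form is $x^{T}L(G)x = \sum_{\{i,j\}\in E}(x_i - x_j)^2 \ge 0$. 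Since $\alpha \in [0,1]$ forces $1-\alpha \ge 0$, it follows that $D(G) - A_{\alpha}(G) = (1-\alpha)L(G)$ is positive semidefinite, that is, $x^{T}A_{\alpha}(G)x \le x^{T}D(G)x$ for every $x \in \mathbb{R}^n$.

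Next I would convert this pointwise Rayleigh-quotient inequality into an eigenvalue inequality via the Courant--Fischer min--max characterization,
\[
\lambda_k(M) = \min_{\substack{U \subseteq \mathbb{R}^n \\ \dim U = n-k+1}}\ \max_{\substack{x \in U \\ x \neq 0}} \frac{x^{T}Mx}{x^{T}x}.
\]
Because $x^{T}A_{\alpha}(G)x \le x^{T}D(G)x$ holds for every $x$, for each fixed subspace $U$ the inner maximum for $A_{\alpha}(G)$ is at most the inner maximum for $D(G)$; taking the minimum over all $U$ of dimension $n-k+1$ then yields $\lambda_k(A_{\alpha}(G)) \le \lambda_k(D(G))$ for every $k$. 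Equivalently, this is Weyl's monotonicity theorem applied to the comparison $A_{\alpha}(G) \preceq D(G)$.

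Finally I would identify $\lambda_k(D(G))$. As $D(G)$ is diagonal with entries $d(v_1),\dots,d(v_n)$, and these are arranged in non-increasing order by the convention fixed in the preliminaries, the eigenvalues of $D(G)$ in non-increasing order are exactly $d(v_1) \ge \cdots \ge d(v_n)$, so $\lambda_k(D(G)) = d(v_k)$. Combining the three steps gives $\lambda_k(A_{\alpha}(G)) \le d(v_k)$, and the special case $k=1$ reads $\lambda_1(A_{\alpha}(G)) \le d(v_1) = \Delta(G)$.

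The argument is short, and the only place requiring care is the passage from the Loewner comparison to the eigenvalue comparison. The Cauchy interlacing result stated in the excerpt (Theorem~\ref{interlacing:matrizes}) relates a matrix to its principal submatrices and does not apply directly here, since $D(G)$ is not a principal submatrix of $A_{\alpha}(G)$. The clean route is therefore the Courant--Fischer / Weyl monotonicity statement above, and I would make sure to invoke it explicitly rather than appeal to interlacing.
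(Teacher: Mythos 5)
Your proof is correct. The paper itself gives no proof of Lemma~\ref{cotaniki} (it is quoted from Nikiforov's 2017 paper), but your argument --- the Loewner comparison $A_{\alpha}(G) \preceq D(G)$ from $D(G) - A_{\alpha}(G) = (1-\alpha)L(G) \succeq 0$, then Courant--Fischer/Weyl monotonicity, then $\lambda_k(D(G)) = d(v_k)$ via the paper's non-increasing degree convention --- is essentially the standard proof from the cited source, and every step, including your caveat that Theorem~\ref{interlacing:matrizes} does not apply here, is sound.
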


\begin{lem}[\cite{LiuDasShu2020}]\label{cotakinkar}
     Let $G$ be a graph of order $n$. If $\alpha \in [0,1]$ and $1 \leq k \leq n$, then
     $$\lambda_k(A_{\alpha}(G)) \geq \alpha d(v_k) - (1-\alpha) \sqrt{\left\lceil \dfrac{k}{2} \right\rceil \left\lfloor\dfrac{k}{2} \right \rfloor}.$$
     If the equality holds and $\alpha \in (0,1)$, then $G$ have an induced subgraph $H \cong K_{\left\lceil\frac{k}{2}\right\rceil, \left\lfloor\frac{k}{2}\right\rfloor}$ such that $d_G(v_i) = \Delta(G)$ for all $v_i \in V(H).$
\end{lem}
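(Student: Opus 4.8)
The plan is to estimate $\lambda_k(A_\alpha(G))$ from below by passing to the principal submatrix indexed by the $k$ vertices of largest degree and then separating its diagonal part from its adjacency part. Order the vertices so that $d(v_1)\ge\cdots\ge d(v_n)$ and set $B=A_\alpha(G)[\{v_1,\dots,v_k\}]$, a symmetric matrix of order $k$. Because taking a principal submatrix leaves the diagonal entries untouched, $B=\alpha D'+(1-\alpha)A'$, where $D'=\mathrm{diag}(d(v_1),\dots,d(v_k))$ records the degrees \emph{in $G$} and $A'=A(G[\{v_1,\dots,v_k\}])$ is the adjacency matrix of the induced subgraph. Two standard reductions now apply. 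By the interlacing inequality (Theorem~\ref{interlacing:matrizes}) with $i=k$, $\lambda_k(A_\alpha(G))\ge\lambda_k(B)=\lambda_{\min}(B)$; and since $\alpha,1-\alpha\ge 0$, Weyl's inequality for the least eigenvalue gives
$$\lambda_{\min}(B)\ge\alpha\,\lambda_{\min}(D')+(1-\alpha)\,\lambda_{\min}(A')=\alpha\, d(v_k)+(1-\alpha)\,\lambda_{\min}(A'),$$
using that the minimum diagonal entry of $D'$ is $d(v_k)$.

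The crux is the combinatorial estimate that every graph $H$ on $k$ vertices satisfies $\lambda_{\min}(A(H))\ge-\sqrt{\lceil k/2\rceil\lfloor k/2\rfloor}$. I would prove this directly. Let $x$ be a unit eigenvector for $\lambda_{\min}(A(H))$ and split the vertices into $S^+=\{i:x_i>0\}$ and $S^-=\{i:x_i<0\}$, with $P=\sum_{i\in S^+}x_i$ and $N=\sum_{i\in S^-}x_i$. Discarding the nonnegative same-sign edge terms and then enlarging the negative opposite-sign sum to run over all cross pairs gives $\lambda_{\min}(A(H))=2\sum_{ij\in E(H)}x_ix_j\ge 2PN$. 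Writing $p=|S^+|$, $q=|S^-|$ and $u=\sum_{S^+}x_i^2$, $v=\sum_{S^-}x_i^2$ with $u+v\le 1$, Cauchy--Schwarz yields $P\le\sqrt{pu}$ and $|N|\le\sqrt{qv}$, whence $2PN\ge-2\sqrt{pq}\sqrt{uv}\ge-\sqrt{pq}\ge-\sqrt{\lceil k/2\rceil\lfloor k/2\rfloor}$, the last two steps using $uv\le\tfrac14$ and $pq\le\lceil k/2\rceil\lfloor k/2\rfloor$. Combining with the two reductions proves the inequality.

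The equality case, which is where the argument requires the most care, is the main obstacle. Assume equality holds with $\alpha\in(0,1)$, so all of the above inequalities are tight. Tracing the equality conditions in the combinatorial estimate---no same-sign edges, every cross pair an edge, equality in both Cauchy--Schwarz steps, $u=v=\tfrac12$, and $pq$ maximal---forces $x$ to have full support and shows that $G[\{v_1,\dots,v_k\}]\cong K_{\lceil k/2\rceil,\lfloor k/2\rfloor}$. Its least adjacency eigenvalue is \emph{simple}, so equality in Weyl's inequality, which demands a common least-eigenvector of $\alpha D'$ and $(1-\alpha)A'$, forces that common vector to be a scalar multiple of the full-support vector $x$. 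Precisely here the hypothesis $\alpha\in(0,1)$ is essential: with both $\alpha>0$ and $1-\alpha>0$ the diagonal and adjacency parts are both active, so $x$ must also be a $d(v_k)$-eigenvector of $D'$, and full support then forces $d(v_i)=d(v_k)$ for all $i\le k$. Since $d(v_1)=\Delta(G)$, every vertex of $H=G[\{v_1,\dots,v_k\}]\cong K_{\lceil k/2\rceil,\lfloor k/2\rfloor}$ has degree $\Delta(G)$, as claimed. The subtle points to verify are the simplicity of the extremal eigenvalue (so that Weyl's equality condition pins down the eigenvector) and the propagation of the equal-degree condition across all $k$ coordinates via the full support of $x$.
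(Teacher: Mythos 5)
The paper never proves this lemma: it is quoted directly from \cite{LiuDasShu2020}, so there is no in-paper argument to compare yours against; the comparison can only be with the standard proof in that reference, which follows the same skeleton you use. Your proof is correct. The chain -- Cauchy interlacing on the principal submatrix $B=\alpha D'+(1-\alpha)A'$ indexed by the $k$ largest-degree vertices, Weyl's inequality to split off $\lambda_{\min}(D')=d(v_k)$, and the bound $\lambda_{\min}(A(H))\geq -\sqrt{\lceil k/2\rceil\lfloor k/2\rfloor}$ for any graph $H$ on $k$ vertices -- is exactly the established route; the one place you go beyond the original source is that you reprove the adjacency bound (a classical result of Constantine, usually cited rather than reproved) via the sign-splitting/Cauchy--Schwarz argument, which makes your write-up self-contained. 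Your equality analysis is also sound: tightness of the combinatorial estimate forces no same-sign edges, all cross pairs present, balanced parts and full support, hence $G[\{v_1,\dots,v_k\}]\cong K_{\lceil k/2\rceil,\lfloor k/2\rfloor}$; tightness in Weyl (this is precisely where $\alpha\in(0,1)$ is needed, so that both the $D'$-term and the $A'$-term in the Rayleigh quotient must individually be extremal) yields a common eigenvector, and the simplicity of the least eigenvalue of $K_{\lceil k/2\rceil,\lfloor k/2\rfloor}$ makes that vector the full-support bipartite one, forcing $d(v_i)=d(v_k)=d(v_1)=\Delta(G)$ for all $i\leq k$. Two cosmetic points: since the zero coordinates of $x$ contribute nothing, in fact $u+v=1$ exactly (not merely $\leq 1$); and the degenerate cases $k=1$ or an edgeless induced subgraph (where the least eigenvector has no negative part) deserve a one-line remark, though they do not affect the substance.
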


As a consequence of Lemmas \ref{cotaniki} and \ref{cotakinkar}, we have the following corollary.

\begin{cor}\label{cotasalphaautovalores}
Let $G$ be a graph of order $n$. If $\alpha \in [0,1]$ and $k\in \{1,2,\dots,n\}$, then
\begin{equation*}
\lambda_k(A_{\alpha}(G)) \in \left[\alpha \delta(G) -(1-\alpha)\sqrt{\left\lceil \dfrac{n}{2} \right\rceil \left\lfloor\dfrac{n}{2} \right \rfloor} , \Delta(G)\right].    
\end{equation*}
\end{cor}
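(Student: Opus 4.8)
The plan is to derive each side of the enclosure directly from the two preceding lemmas, using only the monotonicity of the degree sequence together with the monotonicity of the radical term, and the sign conditions supplied by $\alpha \in [0,1]$. Since the statement is a two-sided bound, I would establish the upper and lower estimates separately and then combine them.

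For the upper bound, I would invoke Lemma~\ref{cotaniki}, which gives $\lambda_k(A_\alpha(G)) \leq d(v_k)$ for every $k \in \{1, \dots, n\}$. Because the degree sequence is arranged in non-increasing order, $d(v_k) \leq d(v_1) = \Delta(G)$, and chaining the two inequalities yields $\lambda_k(A_\alpha(G)) \leq \Delta(G)$ immediately.

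For the lower bound, I would start from Lemma~\ref{cotakinkar}, namely $\lambda_k(A_\alpha(G)) \geq \alpha d(v_k) - (1-\alpha)\sqrt{\lceil k/2 \rceil \lfloor k/2 \rfloor}$, and estimate its two summands independently. Since $\alpha \geq 0$ and $d(v_k) \geq d(v_n) = \delta(G)$, we obtain $\alpha d(v_k) \geq \alpha \delta(G)$. For the radical, I would use the identity $\lceil k/2 \rceil \lfloor k/2 \rfloor = \lfloor k^2/4 \rfloor$, which is non-decreasing in $k$, so that for $k \leq n$ it is at most $\lceil n/2 \rceil \lfloor n/2 \rfloor$; as $1-\alpha \geq 0$, this gives $-(1-\alpha)\sqrt{\lceil k/2 \rceil \lfloor k/2 \rfloor} \geq -(1-\alpha)\sqrt{\lceil n/2 \rceil \lfloor n/2 \rfloor}$. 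Summing the two estimates produces the asserted lower bound.

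The argument presents no genuine obstacle: it is a routine combination of the two lemmas. The only point meriting an explicit line is the monotonicity of $\sqrt{\lceil k/2 \rceil \lfloor k/2 \rfloor}$ in $k$, which follows at once from its coincidence with $\lfloor k^2/4 \rfloor$. The hypotheses $\alpha \geq 0$ and $1-\alpha \geq 0$, both guaranteed by $\alpha \in [0,1]$, are precisely what permit the one-sided estimates to be combined without any inequality reversing direction.
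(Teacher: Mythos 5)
Your proof is correct and is essentially the paper's own (implicit) argument: the corollary is stated there as an immediate consequence of Lemmas~\ref{cotaniki} and~\ref{cotakinkar}, combined exactly as you do, via $d(v_k)\leq d(v_1)=\Delta(G)$, $d(v_k)\geq d(v_n)=\delta(G)$, and the monotonicity of $\sqrt{\left\lceil k/2\right\rceil\left\lfloor k/2\right\rfloor}$ in $k$ together with the signs of $\alpha$ and $1-\alpha$. Your explicit justification of the monotonicity through the identity $\left\lceil k/2\right\rceil\left\lfloor k/2\right\rfloor=\left\lfloor k^2/4\right\rfloor$ is a nice touch that the paper leaves unsaid.
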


The following proposition presents the $A_{\alpha}(G)$-spectrum when $G$ is the star or the complete graph.

\begin{prop}[\cite{Nikiforov2017}]\label{autokn}\label{starspectrum}
    If $\alpha \in [0,1]$, then
    \begin{enumerate}[(i)]
        \item $\lambda_k(A_{\alpha}(S_n)) = \begin{cases}
			\dfrac{1}{2}\left(\alpha n + \sqrt{\alpha^2n^2 + 4(n-1)(1-2\alpha)} \right), & \text{if } k = 1;\\[10pt]
			\dfrac{1}{2}\left(\alpha n - \sqrt{\alpha^2n^2 + 4(n-1)(1-2\alpha)} \right), & \text{if } k = n;\\
            \alpha, &  \text{if } 2 \leq k \leq n-1;
    \end{cases}$
    \item $\lambda_k(A_{\alpha}(K_n)) = \begin{cases}
			n-1, & \text{if } k = 1;\\[10pt]
            \alpha n -1, &  \text{if } 2 \leq k \leq n;
    \end{cases}$
    \end{enumerate}
\end{prop}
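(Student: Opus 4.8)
The plan is to exploit the symmetry of $A_\alpha(S_n)$ and $A_\alpha(K_n)$ by writing each matrix explicitly and decomposing the eigenspaces according to an equitable partition of the vertex set.

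For the star in part (i), I would label the center as vertex $1$ and the $n-1$ leaves as $2,\dots,n$, so that
$$A_\alpha(S_n)=\begin{pmatrix}\alpha(n-1) & (1-\alpha)\mathbf{1}^{T}\\ (1-\alpha)\mathbf{1} & \alpha I_{n-1}\end{pmatrix},$$
with $\mathbf{1}$ the all-ones vector of length $n-1$. First I would isolate the eigenvalue $\alpha$: every vector $(0,x_2,\dots,x_n)^{T}$ with $\sum_{i\ge 2}x_i=0$ is an eigenvector with eigenvalue $\alpha$, and these span an $(n-2)$-dimensional space, so $\alpha$ occurs with multiplicity at least $n-2$. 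The two remaining eigenvalues come from the leaf-symmetric eigenvectors $(a,b,\dots,b)^{T}$, which collapse the eigenvalue equation to the $2\times 2$ quotient matrix $\left(\begin{smallmatrix}\alpha(n-1)&(1-\alpha)(n-1)\\(1-\alpha)&\alpha\end{smallmatrix}\right)$. Its characteristic polynomial simplifies to $\lambda^{2}-\alpha n\,\lambda+(n-1)(2\alpha-1)$, and the quadratic formula yields precisely the expressions stated for $\lambda_1$ and $\lambda_n$.

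Part (ii) is even more direct. Since every vertex of $K_n$ has degree $n-1$, we have $D(K_n)=(n-1)I_n$ and $A(K_n)=J_n-I_n$, whence a short computation gives $A_\alpha(K_n)=(\alpha n-1)I_n+(1-\alpha)J_n$. As the spectrum of $J_n$ consists of $n$ (for the eigenvector $\mathbf{1}$) together with $0$ of multiplicity $n-1$, the eigenvalues follow immediately: $n-1$ from $\mathbf{1}$ and $\alpha n-1$ with multiplicity $n-1$.

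The one point requiring care, more than a genuine obstacle, is confirming the non-increasing labeling $\lambda_1\ge\cdots\ge\lambda_n$. For the star, writing $f(\lambda)=\lambda^{2}-\alpha n\,\lambda+(n-1)(2\alpha-1)$ and evaluating at $\lambda=\alpha$ gives $f(\alpha)=-(n-1)(\alpha-1)^{2}\le 0$; since $f$ is an upward parabola, $\alpha$ lies between its two roots, so the root with the $+$ sign is $\lambda_1$, the root with the $-$ sign is $\lambda_n$, and the $n-2$ copies of $\alpha$ sit in between. For $K_n$, the inequality $n-1\ge\alpha n-1$ is equivalent to $\alpha\le 1$, which holds by hypothesis, placing $n-1$ at the top of the spectrum. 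This verifies the positions of all eigenvalues and completes the argument.
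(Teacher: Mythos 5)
Your proof is correct: the explicit form of $A_\alpha(S_n)$, the $(n-2)$-dimensional eigenspace of $\alpha$ supported on the leaves, the $2\times 2$ quotient whose characteristic polynomial is $\lambda^{2}-\alpha n\lambda+(n-1)(2\alpha-1)$, the identity $A_\alpha(K_n)=(\alpha n-1)I_n+(1-\alpha)J_n$, and the ordering checks (in particular $f(\alpha)=-(n-1)(\alpha-1)^{2}\le 0$) are all accurate and together account for all $n$ eigenvalues in each case. Note that the paper itself states this proposition without proof, importing it from Nikiforov (2017); your argument is essentially the standard derivation from that source, so it matches the intended proof rather than diverging from it.
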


Let $N_{\alpha}(G)$ be the submatrix of $A_{\alpha}(G)$ corresponding to internal non-quasipendant vertices. The number of pendant vertices and quasi-pendant vertices of a graph $G$ is denoted by $p(G)$ and $q(G),$ respectively. 
Theorem~\ref{teo:germaincardoso} presents the multiplicity of $\alpha$ as a $A_{\alpha}(G)$-eigenvalue.

\begin{teo}[\cite{PastenCardosoRojo2018}]\label{teo:germaincardoso}
    If $G$ is a graph, then
    $$m_{A_{\alpha}(G)}(\alpha) = p(G) - q(G) + m_{N_{\alpha}(G)}(\alpha).$$
\end{teo}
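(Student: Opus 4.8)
The plan is to convert the statement about the multiplicity of $\alpha$ into a single rank identity and then exploit the block structure that the pendant/quasi-pendant partition forces on $A_{\alpha}(G)-\alpha I$. Throughout I assume $\alpha\in[0,1)$ and that $G$ has no isolated vertices (the case $\alpha=1$ makes $A_{1}(G)=D(G)$ diagonal and is checked directly; isolated vertices and $K_2$-components, where a vertex is simultaneously pendant and quasi-pendant, are peeled off beforehand so that $V=P\sqcup Q\sqcup R$ is genuinely disjoint). Set $B=A_{\alpha}(G)-\alpha I_n$, so that $m_{A_{\alpha}(G)}(\alpha)=\dim\ker B=n-\mathrm{rank}(B)$. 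Writing $P,Q,R$ for the sets of pendant, quasi-pendant, and internal non-quasi-pendant vertices, with $|P|=p$, $|Q|=q$, $|R|=r$, $n=p+q+r$, and using $m_{N_{\alpha}(G)}(\alpha)=r-\mathrm{rank}(B_{RR})$ where $B_{RR}=N_{\alpha}(G)-\alpha I_r$, a direct substitution shows the claimed formula is equivalent to the single \emph{rank identity} $\mathrm{rank}(B)=2q+\mathrm{rank}(B_{RR})$. Everything reduces to proving this.

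First I would record the block structure of $B$ in the ordering $(P,Q,R)$. Each pendant has diagonal entry $\alpha\cdot 1-\alpha=0$ and is adjacent only to a quasi-pendant, so the $P$-diagonal block is $0$, the $P$-$P$ and $P$-$R$ blocks vanish, and the only pendant coupling is $(1-\alpha)A_{PQ}$. The combinatorial heart is the observation that $\mathrm{rank}(A_{PQ})=q$: every row of $A_{PQ}$ has exactly one $1$ (a pendant has a unique neighbor), every quasi-pendant has at least one pendant neighbor, and choosing one pendant $w_j$ for each quasi-pendant $u_j$ produces distinct rows forming a $q\times q$ permutation matrix $\Pi$.

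Next I would split $P=P_1\sqcup P_0$ with $P_1=\{w_1,\dots,w_q\}$ and reorder rows as $(P_1,Q,P_0,R)$ and columns as $(Q,P_1,P_0,R)$, so that $B$ takes the form $\left(\begin{smallmatrix} X & Y \\ Z & W\end{smallmatrix}\right)$ whose leading block $X$, on rows $(P_1,Q)$ and columns $(Q,P_1)$, is the block-triangular matrix $\left(\begin{smallmatrix}(1-\alpha)\Pi & 0 \\ B_{QQ} & (1-\alpha)\Pi^{\mathsf{T}}\end{smallmatrix}\right)$; since $1-\alpha\neq 0$ and $\Pi$ is a permutation, $X$ is invertible of order $2q$. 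I would then compute the Schur complement $S=W-ZX^{-1}Y$. Because the pendants in $P_1$ are adjacent only to their quasi-pendants, the $P_1$-rows of $Y$ and the $P_1$-columns of $Z$ vanish, while a short computation shows $ZX^{-1}$ is supported only on its $P_1$-columns; multiplying these two complementary supports gives $ZX^{-1}Y=0$. Hence $S=W=\left(\begin{smallmatrix}0 & 0\\0 & B_{RR}\end{smallmatrix}\right)$, and $\mathrm{rank}(B)=\mathrm{rank}(X)+\mathrm{rank}(S)=2q+\mathrm{rank}(B_{RR})$, which is exactly the identity sought; unwinding the substitution from the first paragraph yields the theorem.

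I expect the main obstacle to be the bookkeeping in the Schur-complement step, namely verifying cleanly that the cross term $ZX^{-1}Y$ vanishes, since one must track precisely which blocks of $X^{-1}$ are reached from the left by $Z$ and from the right by $Y$. The pleasant feature is that no knowledge of $B_{QQ}$ (the quasi-pendant interactions) is needed: it is absorbed entirely into the invertible block $X$ and never leaks into the Schur complement. A secondary point requiring care is the boundary cases flagged at the outset, which must be handled separately so that the disjoint partition and the condition $1-\alpha\neq 0$ both hold.
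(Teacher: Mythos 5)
This statement is quoted by the paper from \cite{PastenCardosoRojo2018} and is not proved anywhere in the paper, so there is no internal proof to compare against; your proposal has to be judged on its own merits. On those merits the core of your argument is correct and complete: the reduction of the multiplicity identity to the rank identity $\mathrm{rank}(B)=2q+\mathrm{rank}(B_{RR})$ is a faithful rewriting (using $n=p+q+r$ and rank--nullity), the block structure you record is right (pendant diagonal entries of $B$ vanish, pendants couple only to quasi-pendants), the selection of one pendant per quasi-pendant does produce an invertible $2q\times 2q$ block $X$ when $\alpha\neq 1$, and the Schur-complement computation goes through exactly as you sketch: $Z$ is supported off the $P_1$-columns, $X^{-1}$ pushes that support onto the $P_1$-columns, and the $P_1$-rows of $Y$ vanish, so $ZX^{-1}Y=0$ and Guttman rank additivity gives $\mathrm{rank}(B)=2q+\mathrm{rank}(B_{RR})$. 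This is a clean, self-contained route, in the same spirit as Faria-type arguments but organized around a single rank identity rather than eigenvector constructions.

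The one genuine flaw is your parenthetical claim that the case $\alpha=1$ ``is checked directly.'' It cannot be: at $\alpha=1$ the identity is false whenever $q(G)\geq 1$. For $G\cong P_3$ one has $A_1(P_3)=\mathrm{diag}(1,2,1)$, so $m_{A_1}(1)=2$, while $p-q+m_{N_1}(1)=2-1+0=1$. This is not a defect of your main argument (which correctly requires $1-\alpha\neq 0$) but of the statement as quoted in this paper, which silently drops the hypothesis $\alpha\in[0,1)$ present in the original result of \cite{PastenCardosoRojo2018}; your write-up should say that the theorem is restricted to $\alpha\in[0,1)$ rather than suggest the remaining case holds. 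A similar, smaller caution applies to your peeling of isolated vertices: at $\alpha=0$ an isolated vertex contributes $1$ to $m_{A_0(G)}(0)$, so the bookkeeping only balances if such vertices are counted in $N_{\alpha}(G)$ (i.e., if ``internal non-quasipendant'' is read as ``neither pendant nor quasi-pendant''); under the paper's literal definition of internal ($d\geq 2$) the discrepancy again sits in the statement, not in your argument, but it deserves an explicit sentence rather than being absorbed into ``peeled off beforehand.''
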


Proposition~\ref{prop:linxue} establishes an interlacing relation between the eigenvalues of the matrices $A_{\alpha}(G)$ and $A_{\alpha}(G-e)$, where $G-e$ is the subgraph of $G$ obtained by removing the edge $e$.

\begin{prop}[\cite{LinXueShu2018}]\label{prop:linxue}
    Let $G$ be a graph of order $n$. If $e \in E(G)$ and $\alpha \in [1/2, 1]$, then
    $$\lambda_i(A_{\alpha}(G)) \geq \lambda_i(A_{\alpha}(G-e))$$
    for $1 \leq i \leq n$.
\end{prop}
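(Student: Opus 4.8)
The plan is to show that the difference matrix $A_\alpha(G) - A_\alpha(G-e)$ is positive semidefinite precisely when $\alpha \in [1/2,1]$, and then to invoke the monotonicity of eigenvalues under addition of a positive semidefinite matrix. The hypothesis $\alpha \ge 1/2$ will enter at exactly one point, making clear why it is needed.

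First I would fix $e = \{u,v\}$ and compute $\Delta := A_\alpha(G) - A_\alpha(G-e)$ entrywise. Deleting $e$ lowers both $d(u)$ and $d(v)$ by one and zeroes the symmetric adjacency entries at positions $(u,v)$ and $(v,u)$. Since $A_\alpha = \alpha D + (1-\alpha)A$ and $A$ has zero diagonal while $D$ is diagonal, the diagonal entries at $u$ and $v$ each drop by $\alpha$ and the off-diagonal entries at $(u,v)$ and $(v,u)$ each drop by $(1-\alpha)$. Hence $\Delta$ vanishes outside the two coordinates indexed by $u$ and $v$, and on the $\{u,v\}$ block it equals
\[
\begin{pmatrix} \alpha & 1-\alpha \\ 1-\alpha & \alpha \end{pmatrix}.
\]

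Next I would diagonalize this $2 \times 2$ block: its eigenvalues are $\alpha + (1-\alpha) = 1$, with eigenvector $(1,1)^\top$, and $\alpha - (1-\alpha) = 2\alpha - 1$, with eigenvector $(1,-1)^\top$. For $\alpha \in [1/2,1]$ one has $2\alpha - 1 \ge 0$, so both eigenvalues are nonnegative; consequently the block, and therefore all of $\Delta$, is positive semidefinite, $\Delta \succeq 0$. This sign computation is the sole place where $\alpha \ge 1/2$ is used, and it is exactly what fails for $\alpha < 1/2$.

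Finally, writing $A_\alpha(G) = A_\alpha(G-e) + \Delta$ with $\Delta \succeq 0$, I would apply the Courant–Fischer min-max characterization: for any $i$-dimensional subspace $S$ and any unit vector $x \in S$, $x^\top A_\alpha(G) x = x^\top A_\alpha(G-e) x + x^\top \Delta x \ge x^\top A_\alpha(G-e) x$, and taking the minimum over $x$ followed by the maximum over $S$ yields $\lambda_i(A_\alpha(G)) \ge \lambda_i(A_\alpha(G-e))$ for every $i$. Equivalently, this is immediate from Weyl's inequality $\lambda_i(M+N) \ge \lambda_i(M) + \lambda_n(N)$ applied with $N = \Delta$ and $\lambda_n(\Delta) \ge 0$. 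There is no real obstacle here: once $\Delta$ is identified, the argument is routine, and the only substantive content is verifying the nonnegativity of $2\alpha - 1$, which pins down the range $[1/2,1]$.
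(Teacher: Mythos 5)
Your proof is correct: the difference $A_\alpha(G)-A_\alpha(G-e)$ is supported on the $2\times 2$ principal block $\left(\begin{smallmatrix}\alpha & 1-\alpha\\ 1-\alpha & \alpha\end{smallmatrix}\right)$, whose eigenvalues $1$ and $2\alpha-1$ are nonnegative exactly when $\alpha\geq 1/2$, and Weyl's inequality (or Courant--Fischer) then yields $\lambda_i(A_\alpha(G))\geq\lambda_i(A_\alpha(G-e))$ for all $i$. Note that this paper gives no proof of the proposition at all --- it is imported verbatim from the cited reference of Lin, Xue and Shu --- and the argument there is essentially the one you reconstructed (positive semidefiniteness of the per-edge contribution for $\alpha\geq 1/2$, followed by eigenvalue monotonicity), so your write-up is the standard proof of this fact.
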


Proposition \ref{prop:positive} shows when $A_{\alpha}(G)$ is positive semidefinite or positive definite.

\begin{prop}[\cite{NikiforovRojo2017}]\label{prop:positive}
 If $\alpha \in [\frac{1}{2},1]$, then $A_{\alpha}(G)$ is positive semidefinite. If $\alpha\in (\frac{1}{2},1]$ and $G$ has no isolated vertices, then $A_{\alpha}(G)$ is positive definite.   
\end{prop}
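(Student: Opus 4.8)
The plan is to reduce both claims to two standard facts: the diagonal degree matrix $D(G)$ is positive semidefinite (and positive definite exactly when $G$ has no isolated vertex), and the signless Laplacian $Q(G)=D(G)+A(G)$ is positive semidefinite. The bridge connecting these to $A_\alpha(G)$ is the algebraic identity
\[
A_\alpha(G) = (2\alpha-1)\,D(G) + (1-\alpha)\,Q(G),
\]
which I would verify by substituting $Q(G)=D(G)+A(G)$ and collecting terms: the coefficient of $D(G)$ becomes $(2\alpha-1)+(1-\alpha)=\alpha$ and that of $A(G)$ becomes $1-\alpha$, recovering $\alpha D(G)+(1-\alpha)A(G)$. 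Equivalently, this writes $A_\alpha(G)$ as the convex combination $(2\alpha-1)A_1(G)+2(1-\alpha)A_{1/2}(G)$ of $D(G)=A_1(G)$ and $\tfrac12 Q(G)=A_{1/2}(G)$, whose coefficients are nonnegative and sum to $1$ precisely when $\alpha\in[\tfrac12,1]$.

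First I would record that $Q(G)$ is positive semidefinite. This follows from the factorization $Q(G)=R R^{\top}$, where $R$ is the unsigned vertex--edge incidence matrix of $G$; then $x^{\top}Q(G)x=\|R^{\top}x\|^2\ge 0$ for every $x$, and concretely $x^{\top}Q(G)x=\sum_{\{u,v\}\in E}(x_u+x_v)^2$. Since $D(G)$ is a nonnegative diagonal matrix, it is likewise positive semidefinite. For $\alpha\in[\tfrac12,1]$ both scalars $2\alpha-1$ and $1-\alpha$ are nonnegative, so $A_\alpha(G)$ is a nonnegative linear combination of two positive semidefinite matrices and is therefore positive semidefinite, which settles the first assertion.

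For the second assertion, suppose $\alpha\in(\tfrac12,1]$ and $G$ has no isolated vertex. Then every diagonal entry $d(v)$ of $D(G)$ satisfies $d(v)\ge 1$, so $D(G)$ is positive definite, and the scalar $2\alpha-1$ is strictly positive; hence $(2\alpha-1)D(G)$ is positive definite. Adding the positive semidefinite matrix $(1-\alpha)Q(G)$ preserves positive definiteness, since for every nonzero $x$,
\[
x^{\top}A_\alpha(G)x = (2\alpha-1)\sum_{v}d(v)\,x_v^2 + (1-\alpha)\sum_{\{u,v\}\in E}(x_u+x_v)^2 \ge (2\alpha-1)\sum_{v}x_v^2 > 0,
\]
using $d(v)\ge 1$ and discarding the nonnegative second sum. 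This completes the plan. The argument is essentially bookkeeping of nonnegative coefficients, so I do not expect a genuine obstacle; the only points needing care are checking that $Q(G)\succeq 0$ (for which the incidence-matrix factorization is the cleanest route) and observing that the no-isolated-vertex hypothesis is exactly what upgrades the bound $\sum_v d(v)x_v^2\ge \sum_v x_v^2$, and hence the whole quadratic form, from nonnegative to strictly positive.
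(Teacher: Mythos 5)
Your proof is correct. Note that the paper itself does not prove this statement --- it is quoted as a preliminary from the cited reference \cite{NikiforovRojo2017} --- so there is no in-paper argument to compare against; your decomposition $A_\alpha(G) = (2\alpha-1)D(G) + (1-\alpha)Q(G)$, together with the incidence-matrix factorization $Q(G)=RR^{\top}$ and the edgewise quadratic form $\sum_{\{u,v\}\in E}\bigl(\alpha(x_u^2+x_v^2)+2(1-\alpha)x_ux_v\bigr)$, is essentially the standard argument used in the original source, and both the semidefinite and definite cases are handled without gaps.
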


\section{Distribution of $A_{\alpha}(G)$-eigenvalues involving $p(G)$ and $q(G)$}\label{sec:main1}

Motivated by the existing works in the literature on the distribution of $L(G)$-eigenvalues in certain intervals depending on $p(G)$ and $q(G)$ as can be seen in \cite{GroneMerrisSunder1990}, in this section, we present lower bounds for the distribution of $A_{\alpha}(G)$-eigenvalues in intervals involving these parameters. For this, consider $G$ be a graph such that $\delta(G)=1$ i.e. $p(G) \geq 1.$ From Corollary ~\ref{cotasalphaautovalores}, we have that $m_{A_{\alpha}}[\psi, \Delta(G)]=n,$ where $\psi=\alpha - (1 - \alpha)\sqrt{\lceil n/2 \rceil \lfloor n/2 \rfloor}.$ The next result presents the distribution of $A_{\alpha}(G)$-eigenvalues in the intervals $[\psi, \alpha]$ and $[\alpha, \Delta(G)]$.

\begin{teo}\label{org:dist1}
Let $G$ be a connected graph of order $n \geq 3$ and $\alpha \in [0,1]$. Then
$$m_{A_{\alpha}(G)}[\psi, \alpha] \geq p(G) \quad \text{and} \quad m_{A_{\alpha}(G)}[\alpha, \Delta(G)] \geq p(G).$$
Moreover,  equalities hold for the star $S_n$ and for the corona $H \circ K_1$, where $H$ is a connected graph. 
\end{teo}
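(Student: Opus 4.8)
The plan is to derive both lower bounds simultaneously from a single application of eigenvalue interlacing, and then to pin down the equality cases through a counting identity fed by Theorem~\ref{teo:germaincardoso}.

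First I would exploit the fact that, since $G$ is connected with $n \geq 3$, no two pendant vertices can be adjacent (two mutually adjacent degree-one vertices would form a $K_2$ component, forcing $n = 2$). Consequently the principal submatrix $B$ of $A_{\alpha}(G)$ indexed by the $p(G)$ pendant vertices has no off-diagonal entries and constant diagonal $\alpha \cdot 1 = \alpha$; that is, $B = \alpha I_{p(G)}$, whose eigenvalues are all equal to $\alpha$. Applying the interlacing inequality of Theorem~\ref{interlacing:matrizes} with $M = A_{\alpha}(G)$ and $r = p(G)$ yields, for every $i \in \{1, \dots, p(G)\}$,
\[
\lambda_i(A_{\alpha}(G)) \ \geq\ \alpha \ \geq\ \lambda_{n - p(G) + i}(A_{\alpha}(G)).
\]
The left-hand inequalities say that the $p(G)$ largest eigenvalues are at least $\alpha$; combined with the upper bound $\lambda_i(A_{\alpha}(G)) \leq \Delta(G)$ of Lemma~\ref{cotaniki}, they all lie in $[\alpha, \Delta(G)]$, giving $m_{A_{\alpha}(G)}[\alpha, \Delta(G)] \geq p(G)$. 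Symmetrically, the right-hand inequalities place the $p(G)$ smallest eigenvalues below $\alpha$, and Corollary~\ref{cotasalphaautovalores} bounds them below by $\psi$, so they all lie in $[\psi, \alpha]$, giving $m_{A_{\alpha}(G)}[\psi, \alpha] \geq p(G)$. This settles both inequalities.

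For the equality cases I would first record the counting identity obtained by inclusion–exclusion: since every $A_{\alpha}(G)$-eigenvalue lies in $[\psi, \Delta(G)]$ by Corollary~\ref{cotasalphaautovalores}, and $[\psi, \alpha] \cup [\alpha, \Delta(G)] = [\psi, \Delta(G)]$ with the two subintervals meeting only at $\alpha$,
\[
m_{A_{\alpha}(G)}[\psi, \alpha] + m_{A_{\alpha}(G)}[\alpha, \Delta(G)] = n + m_{A_{\alpha}(G)}(\alpha).
\]
Together with the two lower bounds already established, this shows that both are equalities precisely when $m_{A_{\alpha}(G)}(\alpha) = 2p(G) - n$. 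The final step is to verify this condition for the two families using Theorem~\ref{teo:germaincardoso}. In both the star $S_n$ and the corona $H \circ K_1$, every non-pendant vertex is quasi-pendant, so there are no internal non-quasipendant vertices, the matrix $N_{\alpha}$ is empty, and hence $m_{A_{\alpha}(G)}(\alpha) = p(G) - q(G)$. For $S_n$ one has $p = n-1$ and $q = 1$, giving $m_{A_{\alpha}}(\alpha) = n - 2 = 2p - n$; for $H \circ K_1$ one has $p = q = n/2$, giving $m_{A_{\alpha}}(\alpha) = 0 = 2p - n$. In each case the sum of the two counts equals $2p(G)$, forcing both lower bounds to be attained.

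The main obstacle I anticipate is the boundary value $\alpha = 1$: there $A_1(G) = D(G)$, $\psi = \alpha$, and every eigenvalue of a graph without isolated vertices is at least $1 = \alpha$, so $m_{A_1}[\alpha, \Delta(G)] = n$ and the upper equality degenerates (equivalently, the multiplicity formula of Theorem~\ref{teo:germaincardoso} behaves differently at $\alpha = 1$). I would therefore either restrict the equality assertion to $\alpha \in [0,1)$ or treat $\alpha = 1$ separately, where $[\psi,\alpha]$ collapses to $\{\alpha\}$ and the counting must be redone by hand. As an independent check, one can confirm the star case directly from the explicit spectrum in Proposition~\ref{starspectrum}, and the corona case by block-reducing $A_{\alpha}(H \circ K_1)$ to the family of $2 \times 2$ quotient matrices indexed by the eigenvalues $\mu_j$ of $A_{\alpha}(H)$, whose two roots $\alpha + \tfrac{1}{2}\bigl(\mu_j \pm \sqrt{\mu_j^{2} + 4(1-\alpha)^2}\bigr)$ straddle $\alpha$.
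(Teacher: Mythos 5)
Your proof of the two inequalities is essentially identical to the paper's: pendant vertices are pairwise non-adjacent (which the paper uses implicitly), so they index a principal submatrix $\alpha I_{p(G)}$, and Theorem~\ref{interlacing:matrizes} together with Lemma~\ref{cotaniki} and Corollary~\ref{cotasalphaautovalores} gives both bounds. Where you genuinely diverge is the equality part. The paper verifies equality by explicit spectral computation: for $S_n$ it reads off Proposition~\ref{starspectrum}, and for $H\circ K_1$ it computes the full characteristic polynomial via a block-determinant identity (Silvester), obtains the eigenvalue pairs $\tfrac{1}{2}\bigl(2\alpha+\lambda_i(A_{\alpha}(H))\pm\sqrt{\omega}\bigr)$, and checks that each pair straddles $\alpha$. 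You instead prove the counting identity $m_{A_{\alpha}(G)}[\psi,\alpha]+m_{A_{\alpha}(G)}[\alpha,\Delta(G)]=n+m_{A_{\alpha}(G)}(\alpha)$, reduce equality in both bounds to the single condition $m_{A_{\alpha}(G)}(\alpha)=2p(G)-n$, and verify that condition for both families at once via Theorem~\ref{teo:germaincardoso}, using that in $S_n$ and $H\circ K_1$ every internal vertex is quasi-pendant, so $N_{\alpha}$ is empty and $m_{A_{\alpha}(G)}(\alpha)=p(G)-q(G)$. Your route is shorter and unifies the two families under one criterion, at the cost of importing the multiplicity theorem; the paper's route is self-contained and produces the corona spectrum explicitly, which is reusable elsewhere.

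Your flagged concern about $\alpha=1$ is not pedantry: the equality assertion genuinely fails there, and the paper's own proof overlooks this. At $\alpha=1$ one has $A_1(G)=D(G)$, so for $S_n$ the spectrum is $\{n-1,1,\dots,1\}$ and $m_{A_1(S_n)}[\alpha,\Delta]=n>n-1=p(S_n)$; similarly for $H\circ K_1$ every eigenvalue is at least $1=\alpha$, so the second count is $n$, not $p(G)$. (Correspondingly, Theorem~\ref{teo:germaincardoso} as stated cannot hold at $\alpha=1$, since $m_{D(G)}(1)=p(G)\neq p(G)-q(G)$ when $q(G)>0$.) Restricting the equality statement to $\alpha\in[0,1)$, as you propose, is exactly the right repair, and is where your write-up is actually more careful than the paper's.
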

\begin{proof}
Let $p(G)=p.$ Labeling the pendant vertices first; then, by deleting the last $n-p$ rows and columns of $A_{\alpha}(G)$, we obtain the principal submatrix $\alpha I_p$.
    By Theorem~\ref{interlacing:matrizes}, it follows that
\begin{equation*}
\lambda_i(A_{\alpha}(G)) \geq \alpha \geq \lambda_{i+(n-p)}(A_{\alpha}(G)), \quad 1 \leq i \leq p.   
\end{equation*}

Hence, at least $p$ eigenvalues of $A_{\alpha}(G)$ are not smaller than $\alpha$, and at least $p$ are not greater than $\alpha$. From Corollary~\ref{cotasalphaautovalores}, the inequalities follow.

\medskip
For the star $S_n$, since $p(S_n)=n-1$, Proposition~\ref{starspectrum} gives 
\begin{equation*}
m_{A_{\alpha}(S_n)}[\psi, \alpha] 
= m_{A_{\alpha}(S_n)}[\alpha, \Delta(S_n)] 
= p(S_n).    
\end{equation*}

Now, let $H$ be a connected graph of order $n_1$, and let $G \cong H \circ K_1$. Then $G$ has $2n_1$ vertices with $n_1$ pendants. Labeling the pendant vertices first, we obtain
$$
A_{\alpha}(G) = 
\begin{bmatrix}
    \alpha I_{n_1} & (1-\alpha)I_{n_1} \\
    (1-\alpha)I_{n_1} & A_{\alpha}(H) + \alpha I_{n_1}
\end{bmatrix}.
$$

From Theorem~3 of \cite{Silvester2000},
\[
p_{A_{\alpha}}(G, x) 
= \det\!\left([(x-\alpha)^2-(1-\alpha)^2]I_{n_1}-(x-\alpha)A_{\alpha}(H)\right).
\]
Therefore,
\begin{equation*}
p_{A_{\alpha}}(G, x) = \prod_{i=1}^{n_1}\left(x^2-(2\alpha+\lambda_i(A_{\alpha}(H)))x + \alpha(\lambda_i(A_{\alpha}(H) +2)-1\right),   
\end{equation*}
and consequently, the $A_{\alpha}(G)$-eigenvalues are
\[
\dfrac{2\alpha + \lambda_i(A_{\alpha}(H)) \pm \sqrt{\omega}}{2},
\]
where $\omega = 4(\alpha-1)^2+\lambda_i^2(A_{\alpha}(H))$, for $1 \leq i \leq n_1$.

\medskip

Since $-\sqrt{\omega} \leq \lambda_i(A_{\alpha}(H)) \leq \sqrt{\omega}$, 
the left inequality gives
\begin{equation*}
\alpha \leq \frac{2\alpha + \lambda_i(A_{\alpha}(H)) + \sqrt{\omega}}{2},    
\end{equation*}
while the right inequality gives
\begin{equation*}
\frac{2\alpha + \lambda_i(A_{\alpha}(H)) - \sqrt{\omega}}{2} \leq \alpha.    
\end{equation*}

Therefore, for each eigenvalue $\lambda_i(A_{\alpha}(H))$, one eigenvalue of $A_{\alpha}(G)$ is not greater than $\alpha$ and another is not smaller than $\alpha$. Consequently,
\[
m_{A_{\alpha}(G)}[\psi,\alpha] 
= m_{A_{\alpha}(G)}[\alpha,\Delta(G)] 
= n_1 = p(G),
\]
showing that the bounds are attained for $G \cong H \circ K_1$.
\end{proof}

It is worth noting that if $G \cong \bigcup_{i=1}^k G_i$ for $k \ge 1$, where each $G_i$ is a connected graph of order $n \ge 3$ with $p(G_i) = p_i > 0$, then $\sigma(A_{\alpha}(G)) = \bigcup_{i=1}^k \sigma(A_{\alpha}(G_i))$, and the following corollary is obtained.

\begin{cor}\label{distuniao}
    Let $G = \bigcup_{i=1}^k G_i$ for $k \ge 1$, where $G_i$ is a connected graph of order $n_i\ge 3$ for $1 \le i \le k$ with $p_i = p(G_i) > 0$. If $\alpha \in [0,1]$, then
    $$m_{A_{\alpha}(G)}[\psi, \alpha] \ge p_1 + p_2 + \cdots + p_k \quad \text{and} \quad m_{A_{\alpha}(G)}[\alpha, \Delta(G)] \ge p_1 + p_2 + \cdots + p_k.$$
\end{cor}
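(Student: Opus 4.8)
The plan is to reduce the disjoint-union case to the single-component case already handled by Theorem~\ref{org:dist1}. The crucial structural fact is spectral: for a graph $G$ that decomposes as a disjoint union $G = \bigcup_{i=1}^k G_i$, the matrix $A_{\alpha}(G)$ is block diagonal with blocks $A_{\alpha}(G_i)$, after relabelling vertices so that each component's vertices are grouped together. This is because both $D(G)$ and $A(G)$ respect the component decomposition (there are no edges between components, and each vertex's degree is computed within its own component), so $A_{\alpha}(G)=\alpha D(G)+(1-\alpha)A(G)$ inherits the block-diagonal structure. Consequently the multiset of eigenvalues of $A_{\alpha}(G)$ is the union of the multisets of eigenvalues of the $A_{\alpha}(G_i)$, which is exactly the remark $\sigma(A_{\alpha}(G)) = \bigcup_{i=1}^k \sigma(A_{\alpha}(G_i))$ preceding the corollary.

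Granting this, the counting is additive. For any interval $I$, the number of eigenvalues of $A_{\alpha}(G)$ lying in $I$ is the sum over $i$ of the number of eigenvalues of $A_{\alpha}(G_i)$ in $I$; in the paper's notation, $m_{A_{\alpha}(G)}(I) = \sum_{i=1}^k m_{A_{\alpha}(G_i)}(I)$. First I would apply this additivity with $I=[\psi,\alpha]$ and then with $I=[\alpha,\Delta(G)]$. Each $G_i$ is, by hypothesis, a connected graph of order $n_i\ge 3$ with $p(G_i)=p_i>0$, so Theorem~\ref{org:dist1} applies to each component individually, yielding $m_{A_{\alpha}(G_i)}[\psi,\alpha]\ge p_i$ and $m_{A_{\alpha}(G_i)}[\alpha,\Delta(G_i)]\ge p_i$. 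Summing over $i$ gives the two claimed inequalities with right-hand side $p_1+\cdots+p_k$.

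A small technical point needs care with the endpoints of the intervals. The value $\psi$ depends on the order of the graph through $\psi=\alpha-(1-\alpha)\sqrt{\lceil n/2\rceil\lfloor n/2\rfloor}$, and $\Delta(G)=\max_i\Delta(G_i)$. For the lower bound on $[\alpha,\Delta(G)]$, note that $\Delta(G_i)\le\Delta(G)$, so $[\alpha,\Delta(G_i)]\subseteq[\alpha,\Delta(G)]$ and thus $m_{A_{\alpha}(G_i)}[\alpha,\Delta(G)]\ge m_{A_{\alpha}(G_i)}[\alpha,\Delta(G_i)]\ge p_i$; monotonicity of the counting function under interval inclusion preserves the inequality after summation. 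Similarly, since $n=\sum_i n_i\ge n_i$, the quantity $\lceil n/2\rceil\lfloor n/2\rfloor$ is at least the corresponding quantity for each $n_i$, so the global $\psi$ is no larger than each component's $\psi$; hence every eigenvalue counted in a component's $[\psi_i,\alpha]$ also lies in the global $[\psi,\alpha]$, and the summation again goes through.

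I do not expect any genuine obstacle here, as the argument is essentially bookkeeping on top of Theorem~\ref{org:dist1}. The only thing demanding attention is the endpoint matching just described, since the definitions of $\psi$ and $\Delta$ are in terms of the whole graph $G$ rather than the individual components; one must confirm the interval inclusions go in the favourable direction so that per-component counts only undercount, never overcount, the global interval. Everything else follows immediately from the block-diagonal form of $A_{\alpha}(G)$ and the additivity of eigenvalue counts over a disjoint union.
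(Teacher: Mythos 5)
Your proof is correct and takes essentially the same route as the paper, which obtains the corollary directly from the remark that $\sigma(A_{\alpha}(G)) = \bigcup_{i=1}^k \sigma(A_{\alpha}(G_i))$ together with a componentwise application of Theorem~\ref{org:dist1}. Your explicit verification of the endpoint inclusions ($\psi \le \psi_i$ and $\Delta(G_i) \le \Delta(G)$) is a detail the paper leaves implicit, and it goes in the favourable direction exactly as you argue.
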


In \cite{GroneMerrisSunder1990}, the authors show lower bounds for the distribution of $L(G)$-eigenvalues in the intervals $[0,1]$ and $[1, \infty)$ in terms of $p(G)$. As the $A_{\alpha}(G)$-matrices do not include the Laplacian matrix for non‑bipartite graphs, as a consequence of Theorem~\ref{org:dist1}, we obtain the distribution of $A(G)$-eigenvalues in the intervals $\left[-\sqrt{\left\lceil \frac{n}{2} \right\rceil \left\lfloor\frac{n}{2} \right \rfloor},0\right]$ and $[0, \Delta(G)]$ and for $Q(G)$-eigenvalues in the intervals $[0,1]$ and $[1, 2\Delta(G)]$ in terms of $p(G)$. These results are presented in Corollary~\ref{orgobs:dist1}.

\begin{cor}\label{orgobs:dist1}
    If $G$ is a connected graph of order $n \ge 3$ with $p(G)>0$, then
    \begin{enumerate}[(i)]
        \item $m_{A(G)}\bigl[-\sqrt{\left\lceil \frac{n}{2} \right\rceil \left\lfloor\frac{n}{2} \right \rfloor},0\bigr] \ge p(G)$;
        \item $m_{A(G)}[0, \Delta(G)] \ge p(G)$;
        \item $m_{Q(G)}[0, 1] \ge p(G)$;
        \item $m_{Q(G)}[1, 2\Delta(G)] \ge p(G)$.
    \end{enumerate}
\end{cor}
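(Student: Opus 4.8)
The plan is to specialize Theorem~\ref{org:dist1} at the two parameter values $\alpha = 0$ and $\alpha = 1/2$, for which $A_{\alpha}(G)$ collapses to $A(G)$ and to $\tfrac{1}{2}Q(G)$, respectively, and then read off each of the four claims. For parts (i) and (ii), I would simply put $\alpha = 0$. Then $A_0(G) = A(G)$, and the quantity $\psi = \alpha - (1-\alpha)\sqrt{\lceil n/2 \rceil \lfloor n/2 \rfloor}$ becomes $\psi = -\sqrt{\lceil n/2 \rceil \lfloor n/2 \rfloor}$. Substituting into the two inequalities of Theorem~\ref{org:dist1}, the interval $[\psi,\alpha]$ is exactly $[-\sqrt{\lceil n/2 \rceil \lfloor n/2 \rfloor},0]$ and $[\alpha,\Delta(G)]$ is exactly $[0,\Delta(G)]$, so (i) and (ii) follow with no further work.

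For parts (iii) and (iv), I would set $\alpha = 1/2$ and use the identity $A_{1/2}(G) = \tfrac{1}{2}Q(G)$, which gives $\lambda_k(Q(G)) = 2\lambda_k(A_{1/2}(G))$ for every $k$. Multiplying each eigenvalue by $2$ maps the interval $[\alpha,\Delta(G)] = [1/2,\Delta(G)]$ onto $[1,2\Delta(G)]$, so the second inequality of Theorem~\ref{org:dist1} immediately yields $m_{Q(G)}[1,2\Delta(G)] \ge p(G)$, which is (iv). Applying the same scaling to the first inequality sends $[\psi,1/2]$ onto $[2\psi,1]$, where $2\psi = 1 - \sqrt{\lceil n/2 \rceil \lfloor n/2 \rfloor}$, and gives $m_{Q(G)}[2\psi,1] \ge p(G)$.

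The only point needing care, and the step beyond pure substitution, is that this scaled lower endpoint $2\psi$ does not coincide with the endpoint $0$ claimed in (iii). Since $n \ge 3$ we have $\lceil n/2 \rceil \lfloor n/2 \rfloor \ge 2$, so $2\psi < 0$. Here I would invoke Proposition~\ref{prop:positive}: for $\alpha = 1/2$ the matrix $A_{1/2}(G) = \tfrac{1}{2}Q(G)$ is positive semidefinite, hence every eigenvalue of $Q(G)$ is nonnegative and none lies in $[2\psi,0)$. Therefore $m_{Q(G)}[0,1] = m_{Q(G)}[2\psi,1] \ge p(G)$, which is exactly (iii). I expect this replacement of the negative lower endpoint by $0$ via positive semidefiniteness to be the sole non-routine ingredient; the remaining content is a direct translation of Theorem~\ref{org:dist1} through the substitutions $\alpha = 0$ and $\alpha = 1/2$ together with the eigenvalue rescaling $\lambda_k(Q(G)) = 2\lambda_k(A_{1/2}(G))$.
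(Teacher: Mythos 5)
Your proposal is correct and follows exactly the paper's (implicit) derivation: the corollary is stated there as a direct consequence of Theorem~\ref{org:dist1}, specialized at $\alpha=0$ for $A(G)$ and at $\alpha=1/2$ for $\tfrac{1}{2}Q(G)$ with the eigenvalue rescaling $\lambda_k(Q(G))=2\lambda_k(A_{1/2}(G))$. Your use of Proposition~\ref{prop:positive} to replace the negative lower endpoint $1-\sqrt{\left\lceil n/2 \right\rceil\left\lfloor n/2 \right\rfloor}$ by $0$ in part (iii) correctly supplies the one detail the paper leaves unstated.
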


 As $p(G) \geq q(G)$, from Theorem~\ref{org:dist1} we obtain the following corollary.

\begin{cor}\label{obsquasipendentes}
    If $G$ is a connected graph of order $n \ge 3$ and $\alpha \in [0,1]$, then
    $$m_{A_{\alpha}(G)}[\psi, \alpha] \ge q(G) \quad \text{and} \quad m_{A_{\alpha}(G)}[\alpha, \Delta(G)] \ge q(G).$$
\end{cor}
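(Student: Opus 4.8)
The plan is to derive this corollary directly from Theorem~\ref{org:dist1}, whose conclusion already bounds the two eigenvalue counts below by $p(G)$. Since the only additional ingredient needed is the inequality $p(G) \geq q(G)$, the whole argument reduces to two transparent steps, and there is essentially no computational obstacle to overcome.

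First I would establish $p(G) \geq q(G)$. Let $P$ denote the set of pendant vertices and $Q$ the set of quasi-pendant vertices of $G$. Every pendant vertex $v$ has degree $1$ and hence a unique neighbor $f(v)$, which by definition is a quasi-pendant vertex; this defines a map $f \colon P \to Q$. The map is surjective, because each quasi-pendant vertex is, by definition, adjacent to at least one pendant vertex, and that pendant vertex is sent to it by $f$. The existence of a surjection from $P$ onto $Q$ forces $|P| \geq |Q|$, that is, $p(G) \geq q(G)$.

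Second, I would invoke Theorem~\ref{org:dist1}: for a connected graph of order $n \geq 3$ and any $\alpha \in [0,1]$ it gives $m_{A_{\alpha}(G)}[\psi, \alpha] \geq p(G)$ and $m_{A_{\alpha}(G)}[\alpha, \Delta(G)] \geq p(G)$. Chaining each of these with the inequality $p(G) \geq q(G)$ from the first step immediately yields $m_{A_{\alpha}(G)}[\psi, \alpha] \geq q(G)$ and $m_{A_{\alpha}(G)}[\alpha, \Delta(G)] \geq q(G)$, which is exactly the claim.

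The only point that requires any care---and it is minor---is verifying that $f$ is well defined and surjective, i.e. that the unique neighbor of a pendant vertex really is a quasi-pendant vertex and that every quasi-pendant vertex arises in this way. Both facts follow at once from the definitions, so there is no genuine difficulty here; the entire substance of the result is already contained in Theorem~\ref{org:dist1}, and the corollary merely records the weaker bound obtained by replacing $p(G)$ with the possibly smaller invariant $q(G)$.
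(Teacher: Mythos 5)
Your proposal is correct and follows exactly the paper's route: the paper derives this corollary by combining Theorem~\ref{org:dist1} with the inequality $p(G) \geq q(G)$, which it states without proof. Your surjection argument (mapping each pendant vertex to its unique, necessarily quasi-pendant, neighbor) is a valid justification of that inequality, so you have merely made explicit a step the paper takes for granted.
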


As a consequence of Theorems~\ref{org:dist1} and \ref{teo:germaincardoso}, there is a relation between the number of $A_{\alpha}(G)$-eigenvalues in the intervals $[\psi, \alpha)$ and $(\alpha, \Delta(G)]$ with $q(G)$ and the multiplicity of $\alpha$ as an eigenvalue of $N_{\alpha}(G)$, as shown in Corollary~\ref{cor:ger}.

\begin{cor}\label{cor:ger}
    If $G$ is a connected graph of order $n \ge 3$ with $\alpha \in [0,1]$, $A_{\alpha}(G) = A_{\alpha}$, and $N_{\alpha}(G) = N_{\alpha}$, then
    $$m_{A_{\alpha}}[\psi, \alpha) \ge q(G) - m_{N_{\alpha}}(\alpha) \quad\text{and}\quad m_{A_{\alpha}}(\alpha, \Delta(G)] \ge q(G) - m_{N_{\alpha}}(\alpha).$$
\end{cor}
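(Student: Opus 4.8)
The plan is to combine the two lower bounds from Theorem~\ref{org:dist1} with the exact count of eigenvalues equal to $\alpha$ supplied by Theorem~\ref{teo:germaincardoso}, since the half-open intervals $[\psi,\alpha)$ and $(\alpha,\Delta(G)]$ are obtained from the closed intervals $[\psi,\alpha]$ and $[\alpha,\Delta(G)]$ precisely by removing the eigenvalues located at $\alpha$. First I would recall that, by Corollary~\ref{obsquasipendentes}, $m_{A_{\alpha}}[\psi,\alpha]\ge q(G)$ and $m_{A_{\alpha}}[\alpha,\Delta(G)]\ge q(G)$. Then I would observe that the multiplicity of $\alpha$ as an $A_{\alpha}(G)$-eigenvalue is exactly $m_{A_{\alpha}(G)}(\alpha)$, so subtracting these eigenvalues from each closed interval yields
\begin{equation*}
m_{A_{\alpha}}[\psi,\alpha) = m_{A_{\alpha}}[\psi,\alpha] - m_{A_{\alpha}}(\alpha) \quad\text{and}\quad m_{A_{\alpha}}(\alpha,\Delta(G)] = m_{A_{\alpha}}[\alpha,\Delta(G)] - m_{A_{\alpha}}(\alpha).
\end{equation*}

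Next I would invoke Theorem~\ref{teo:germaincardoso}, which gives $m_{A_{\alpha}(G)}(\alpha) = p(G) - q(G) + m_{N_{\alpha}}(\alpha)$. The key arithmetic point is that, although the bound on the closed interval is only $q(G)$, a slightly stronger count is available: for the pendant-vertex argument of Theorem~\ref{org:dist1} one deletes rows and columns to expose the principal submatrix $\alpha I_p$, so interlacing forces at least $p(G)$ eigenvalues to lie in $[\psi,\alpha]$ and at least $p(G)$ in $[\alpha,\Delta(G)]$. Combining the bound $m_{A_{\alpha}}[\psi,\alpha]\ge p(G)$ with the subtraction above gives
\begin{equation*}
m_{A_{\alpha}}[\psi,\alpha) \ge p(G) - m_{A_{\alpha}(G)}(\alpha) = p(G) - \bigl(p(G)-q(G)+m_{N_{\alpha}}(\alpha)\bigr) = q(G) - m_{N_{\alpha}}(\alpha),
\end{equation*}
and the identical computation with $[\alpha,\Delta(G)]$ yields $m_{A_{\alpha}}(\alpha,\Delta(G)]\ge q(G)-m_{N_{\alpha}}(\alpha)$, which is exactly the claim.

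I expect the main subtlety to be the bookkeeping of the count at the point $\alpha$ rather than any deep estimate. One must be careful that $m_{A_{\alpha}(G)}(\alpha)$ genuinely counts all eigenvalues equal to $\alpha$ (with multiplicity) and that none of these are inadvertently double-subtracted or left in the half-open interval; this is where the exact formula of Theorem~\ref{teo:germaincardoso}, rather than a mere inequality, is essential, since it lets the $p(G)$ terms cancel cleanly and leaves precisely $q(G)-m_{N_{\alpha}}(\alpha)$. A secondary point worth verifying is that the quantity $q(G)-m_{N_{\alpha}}(\alpha)$ could in principle be negative, in which case the stated inequality is vacuously true; no separate argument is needed there, but it is worth noting that the bound is only informative when $m_{N_{\alpha}}(\alpha) < q(G)$.
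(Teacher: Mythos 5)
Your proof is correct and follows essentially the same route as the paper's: both arguments write $m_{A_{\alpha}}[\psi,\alpha) = m_{A_{\alpha}}[\psi,\alpha] - m_{A_{\alpha}}(\alpha)$, bound the closed interval below by $p(G)$ via Theorem~\ref{org:dist1}, and substitute the exact multiplicity $m_{A_{\alpha}}(\alpha) = p(G)-q(G)+m_{N_{\alpha}}(\alpha)$ from Theorem~\ref{teo:germaincardoso}, so the $p(G)$ terms cancel and leave $q(G)-m_{N_{\alpha}}(\alpha)$. Your opening appeal to Corollary~\ref{obsquasipendentes} is superfluous (and, as you note, the weaker $q(G)$ bound would not suffice), but you correctly pivot to the $p(G)$ bound, which is precisely what the paper does.
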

\begin{proof}
    We know that
    \begin{eqnarray*}
       & &m_{A_{\alpha}}[\psi, \alpha] = m_{A_{\alpha}}[\psi, \alpha) + m_{A_{\alpha}}(\alpha)\Leftrightarrow\\
       & &m_{A_{\alpha}}[\psi, \alpha) = m_{A_{\alpha}}[\psi, \alpha] - m_{A_{\alpha}}(\alpha)
    \end{eqnarray*}
    and by Theorems~\ref{org:dist1} and \ref{teo:germaincardoso}, we have that \begin{eqnarray*}
       & &m_{A_{\alpha}}[\psi, \alpha) \geq p(G) - (p(G) - q(G) +  m_{N_{\alpha}}(\alpha))\\
       & &m_{A_{\alpha}}[\psi, \alpha) \geq q(G) -  m_{N_{\alpha}}(\alpha).
    \end{eqnarray*}
    An analogous argument shows that $m_{A_{\alpha}}(\alpha, \Delta(G)] \ge q(G) - m_{N_{\alpha}}(\alpha)$ and the result follows.
\end{proof}

For any graph $G$, if all internal vertices are quasi-pendant, then we have that $N_{\alpha}(G)$ is the zero matrix. Hence, as a consequence of Corollary~\ref{cor:ger}, we obtain the following remark.

\begin{obs}\label{obsquasi01}
    If $G$ is a connected graph of order $n \ge 3$ in which every internal vertex is quasipendant, then
    $$m_{A_{\alpha}(G)}[\psi, \alpha) \ge q(G) \quad\text{and}\quad m_{A_{\alpha}(G)}(\alpha, \Delta(G)] \ge q(G).$$
\end{obs}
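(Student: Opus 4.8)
The plan is to obtain Remark~\ref{obsquasi01} as an immediate specialization of Corollary~\ref{cor:ger}, so the whole argument reduces to computing $m_{N_{\alpha}(G)}(\alpha)$ under the stated hypothesis. Recall that $N_{\alpha}(G)$ is, by definition, the principal submatrix of $A_{\alpha}(G)$ indexed by the vertices that are internal and \emph{not} quasi-pendant; hence everything hinges on describing this index set.

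First I would identify that index set. A vertex contributes a row and column to $N_{\alpha}(G)$ exactly when it is internal (degree at least $2$) and not quasi-pendant. The hypothesis asserts that every internal vertex of $G$ is quasi-pendant, so no vertex is simultaneously internal and non-quasipendant. The index set is therefore empty, $N_{\alpha}(G)$ has order zero, its spectrum is empty, and consequently $m_{N_{\alpha}(G)}(\alpha)=0$ for every $\alpha \in [0,1]$.

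Then I would substitute $m_{N_{\alpha}(G)}(\alpha)=0$ into the two inequalities furnished by Corollary~\ref{cor:ger}, namely $m_{A_{\alpha}}[\psi,\alpha) \ge q(G)-m_{N_{\alpha}}(\alpha)$ and $m_{A_{\alpha}}(\alpha,\Delta(G)] \ge q(G)-m_{N_{\alpha}}(\alpha)$. Both right-hand sides collapse to $q(G)$, which yields the two claimed bounds directly.

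The only delicate point, and the closest thing to an obstacle, is the boundary value $\alpha=0$ combined with the phrasing ``$N_{\alpha}(G)$ is the zero matrix'' used just before the statement. If one were to read $N_{\alpha}(G)$ as a genuine zero matrix of positive order, then all of its eigenvalues would equal $0=\alpha$ and $m_{N_{\alpha}}(\alpha)$ would fail to vanish. I would therefore make explicit that the correct reading is that $N_{\alpha}(G)$ has no rows or columns at all, so that $m_{N_{\alpha}(G)}(\alpha)=0$ holds uniformly in $\alpha$, leaving the conclusion of Corollary~\ref{cor:ger} intact.
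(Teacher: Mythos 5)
Your proof is correct and follows exactly the paper's own route: the paper derives this remark by observing that when every internal vertex is quasi-pendant, the vertex set indexing $N_{\alpha}(G)$ is empty (the paper loosely calls $N_{\alpha}(G)$ ``the zero matrix''), so $m_{N_{\alpha}(G)}(\alpha)=0$ and Corollary~\ref{cor:ger} immediately gives both bounds. Your closing clarification, that $N_{\alpha}(G)$ must be read as a matrix of order zero rather than a zero matrix of positive order (which would break the argument at $\alpha=0$), is a worthwhile tightening of the paper's phrasing rather than a deviation from it.
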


\section{Distribution of $A_{\alpha}(G)$-eigenvalues involving $\gamma(G), \beta(G)$ and $\nu(G)$}
\label{sec:main2}

In this section, we present results on the distribution of the $A_{\alpha}(G)$-eigenvalues for $\alpha \in [1/2, 1]$, involving domination number, $\gamma(G)$; edge covering number, $\beta(G)$; and matching number, $\nu(G).$ From Proposition \ref{prop:positive} and Lemma \ref{cotaniki}, if $\alpha \in [\frac{1}{2},1]$ then $\lambda_i(A_{\alpha}(G)) \geq 0,$ and  $\lambda_i(A_{\alpha}(G)) \leq \Delta(G),$ for all  $1 \leq i \leq n$. So for $\alpha \in [\frac{1}{2},1],$ $\lambda_i(A_{\alpha}(G)) \in [0,\Delta(G)].$ As $A_{1/2}(G) = \frac{1}{2}Q(G)$, the results presented in this section are valid for the signless Laplacian matrix. 

\subsection{Distribution involving $\gamma(G)$}

In \cite{HedetniemiJacobsTrevisan2016}, the authors presented results for the distribution of $L(G)$-eigenvalues involving the graph and its subgraphs obtained by edge removal. The next proposition establishes a relation between the number $A_{\alpha}(G)$-eigenvalues 
in the intervals $[0, a)$ and $[a, \Delta]$, where $a \in \mathbb{R}$, with the number of  $A_{\alpha}(H)$-eigenvalues when  $H$ is a subgraph of $G$ obtained by removing some edges. 

\begin{prop}\label{lem:domtreuil}
Let $G = (V, E)$ and $H = (V, F)$ be graphs of order $n$ such that $F \subseteq E$. If $a \in [0, \Delta(G)]$ and $\alpha \in [1/2, 1]$, then
\begin{enumerate}[(i)]
\item $m_{A_{\alpha}(H)}[0, a) \geq m_{A_{\alpha}(G)}[0, a);$\label{lem:domtreuilitem1}
\item $m_{A_{\alpha}(G)}[a, \Delta(G)] \geq m_{A_{\alpha}(H)}[a, \Delta(H)].$\label{lem:domtreuilitem2}
\end{enumerate}
\end{prop}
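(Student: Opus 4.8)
The plan is to exploit the edge-monotonicity of $A_\alpha$-eigenvalues for $\alpha \in [1/2,1]$ that is already packaged in Proposition~\ref{prop:linxue}. Since $H = (V,F)$ with $F \subseteq E$, the graph $H$ is obtained from $G$ by deleting the edges in $E \setminus F$ one at a time. Writing $G = G_0, G_1, \dots, G_t = H$ for this chain of single-edge deletions, Proposition~\ref{prop:linxue} gives $\lambda_i(A_\alpha(G_{j-1})) \geq \lambda_i(A_\alpha(G_j))$ for every $i$ and every $j$. Composing these inequalities along the chain yields the key global comparison
\begin{equation*}
\lambda_i(A_\alpha(G)) \geq \lambda_i(A_\alpha(H)), \qquad 1 \leq i \leq n.
\end{equation*}
This monotonicity of each individual eigenvalue is the engine behind both items; the main work is to translate an eigenvalue-wise inequality into a statement about counts of eigenvalues in intervals.

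For item~\eqref{lem:domtreuilitem2}, I would argue directly from the comparison above. Set $r = m_{A_\alpha(H)}[a, \Delta(H)]$, which counts the eigenvalues $\lambda_1(A_\alpha(H)), \dots, \lambda_r(A_\alpha(H))$ that are at least $a$ (they are the top $r$ by the non-increasing ordering, since $\Delta(H)$ is an upper bound for all of them by Lemma~\ref{cotaniki}). For each $i \leq r$ we have $\lambda_i(A_\alpha(G)) \geq \lambda_i(A_\alpha(H)) \geq a$, so $G$ has at least $r$ eigenvalues that are $\geq a$; since all $A_\alpha(G)$-eigenvalues lie in $[0,\Delta(G)]$ and in particular are $\leq \Delta(G)$, these $r$ eigenvalues all belong to $[a, \Delta(G)]$, giving $m_{A_\alpha(G)}[a,\Delta(G)] \geq r$ as required.

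For item~\eqref{lem:domtreuilitem1}, I would obtain the reverse-direction count by a complementary-counting argument rather than a separate interlacing. Because every $A_\alpha$-eigenvalue of both graphs lies in $[0,\Delta(G)]$ (using $\alpha \geq 1/2$ via Proposition~\ref{prop:positive} and the bound $\lambda_1 \leq \Delta$), the interval $[0,\Delta(G)]$ splits as the disjoint union $[0,a) \cup [a,\Delta(G)]$, so $m_{A_\alpha(M)}[0,a) = n - m_{A_\alpha(M)}[a,\Delta(G)]$ for $M \in \{G,H\}$. Combining this with item~\eqref{lem:domtreuilitem2} applied to $H$ inside $G$ (noting $\Delta(H) \leq \Delta(G)$, so the relevant intervals nest appropriately) converts the lower bound on the top count for $G$ into the desired lower bound $m_{A_\alpha(H)}[0,a) \geq m_{A_\alpha(G)}[0,a)$.

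The step I expect to require the most care is the bookkeeping at the boundary point $a$ and the mismatch between $\Delta(H)$ and $\Delta(G)$. One must be careful that the intervals used in the two items have the correct open/closed endpoints so that the complementary count is exact and no eigenvalue equal to $a$ is double-counted or dropped; the asymmetry in the statement (half-open $[0,a)$ versus closed $[a,\Delta]$) is precisely what makes the complement identity clean, so the argument should be arranged to respect it. Verifying that all eigenvalues genuinely sit in the common interval $[0,\Delta(G)]$ — which is what legitimizes treating $[0,a)$ and $[a,\Delta(G)]$ as a partition of the full spectrum — is the essential hypothesis that the restriction $\alpha \in [1/2,1]$ supplies, and it is worth stating explicitly before invoking the complement.
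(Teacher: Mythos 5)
Your proposal is correct and takes essentially the same approach as the paper: both proofs hinge on iterating Proposition~\ref{prop:linxue} over the edges of $E \setminus F$ to obtain $\lambda_i(A_{\alpha}(G)) \geq \lambda_i(A_{\alpha}(H))$ for all $i$, and then translate this eigenvalue-wise monotonicity into the two counting statements. The differences are only presentational: the paper argues by contradiction and handles the two items symmetrically, while you argue directly and deduce item~\eqref{lem:domtreuilitem1} from item~\eqref{lem:domtreuilitem2} by complementary counting.
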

\begin{proof}
Let $a \in [0, \Delta(G)]$ and $\alpha \in [1/2,1]$. Suppose, by contradiction, that $m_{A_{\alpha}(H)}[0,a) < m_{A_{\alpha}(G)}[0,a)$. Since $G$ and $H$ are of order $n$, if $m_{A_{\alpha}(H)}[0,a) < m_{A_{\alpha}(G)}[0,a)$, there exists at least one $i$, $1 \leq i \leq n$ such that $\lambda_i(A_{\alpha}(H)) > \lambda_i(A_{\alpha}(G))$, which is a contradiction by Lemma~\ref{prop:linxue}. Thus, item~\eqref{lem:domtreuilitem1} follows. Similarly, item~\eqref{lem:domtreuilitem2} is proven.
\end{proof}

As every graph has a spanning forest, from Proposition~\ref{lem:domtreuil}, we obtain the following corollary.

\begin{cor}\label{cor:domtreuil}
Let $G$ be a graph and $T$ a spanning forest of $G$. If $a \in [0, \Delta(G)]$ and $\alpha \in [1/2, 1]$, then
\begin{enumerate}[(i)]
\item $m_{A_{\alpha}(T)}[0, a) \geq m_{A_{\alpha}(G)}[0, a);$\label{cor:domtreuilitem1}
\item $m_{A_{\alpha}(G)}[a, \Delta(G)] \geq m_{A_{\alpha}(T)}[a, \Delta(T)].$\label{cor:domtreuilitem2}
\end{enumerate}
\end{cor}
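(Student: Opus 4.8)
The plan is to derive the corollary directly from Proposition~\ref{lem:domtreuil} by recognizing that a spanning forest is precisely the special case addressed there. A spanning forest $T$ of $G$ has the same vertex set $V(G)$ and an edge set $F \subseteq E(G)$; indeed, $T$ is a subgraph of $G$ obtained by deleting the edges $E(G) \setminus F$. Thus $T = (V, F)$ with $F \subseteq E$ fits exactly the hypotheses of Proposition~\ref{lem:domtreuil} with $H = T$.

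First I would fix $a \in [0, \Delta(G)]$ and $\alpha \in [1/2, 1]$, as in the statement. Then I would observe that since $T$ is spanning, both $G$ and $T$ have order $n$, and $F \subseteq E$, so the pair $(G, T)$ satisfies the requirements of the preceding proposition. Applying item~\eqref{lem:domtreuilitem1} of Proposition~\ref{lem:domtreuil} with $H = T$ yields $m_{A_{\alpha}(T)}[0, a) \geq m_{A_{\alpha}(G)}[0, a)$, which is precisely item~\eqref{cor:domtreuilitem1}. Applying item~\eqref{lem:domtreuilitem2} with $H = T$ yields $m_{A_{\alpha}(G)}[a, \Delta(G)] \geq m_{A_{\alpha}(T)}[a, \Delta(T)]$, which is item~\eqref{cor:domtreuilitem2}.

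There is essentially no obstacle here, since the result is an immediate specialization; the only point worth checking is that a spanning forest always exists so that the corollary is non-vacuous, which the sentence preceding the statement already asserts (``every graph has a spanning forest''). I would present the proof in two short sentences: one invoking that $T$ is a spanning subgraph of $G$ on the same vertex set with a subset of the edges, and one citing Proposition~\ref{lem:domtreuil} with $H = T$ to conclude both inequalities. The proof is therefore a one-line reduction rather than a genuine argument, and the main work was already done in establishing Proposition~\ref{lem:domtreuil}.
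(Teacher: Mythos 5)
Your proposal is correct and matches the paper's own reasoning exactly: the paper states the corollary as an immediate consequence of Proposition~\ref{lem:domtreuil}, applied with $H = T$, since a spanning forest is a spanning subgraph on the same vertex set with a subset of the edges. Nothing further is needed.
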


In \cite{HedetniemiJacobsTrevisan2016}, the authors presented a result about the distribution of $L(G)$-eigenvalues in terms of the order of the graph. The next corollary presents bound for the number of $A_{\alpha}(G)$-eigenvalues in the intervals $[\alpha, \Delta(G)]$ and $\left[\dfrac{3\alpha + \sqrt{9\alpha^2-16\alpha+8}}{2}, \Delta(G)\right]$ based in the order of the graph. 

\begin{cor}
    Let $G$ be a graph of order $n$ that has a Hamiltonian path. If $\alpha \in [1/2, 1]$ then $$m_{A_{\alpha}(G)}[\alpha, \Delta(G)] \geq 2\left\lfloor\dfrac{n}{3}\right\rfloor \text{ and } m_{A_{\alpha}(G)}\left[\dfrac{3\alpha + \sqrt{9\alpha^2-16\alpha+8}}{2}, \Delta(G)\right] \geq \left\lfloor\dfrac{n}{3}\right\rfloor.$$  
\end{cor}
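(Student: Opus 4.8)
The plan is to reduce the statement to a computation on a single small graph, $P_3$, and then transport the count to $G$ through the edge-removal monotonicity of Corollary~\ref{cor:domtreuil}. Since $G$ has a Hamiltonian path, it contains the spanning path $P_n$ as a subgraph. Writing $n = 3q + r$ with $q = \lfloor n/3\rfloor$ and $r \in \{0,1,2\}$, I would delete every third edge of this Hamiltonian path to produce a spanning forest $F = q\,P_3 \cup P_r$ of $G$, where $P_r$ is the leftover path on $r$ vertices (possibly empty). Because $F \subseteq G$ is a spanning forest and $\alpha \in [1/2,1]$, Corollary~\ref{cor:domtreuil}(ii) yields $m_{A_{\alpha}(G)}[a,\Delta(G)] \geq m_{A_{\alpha}(F)}[a,\Delta(F)]$ for any threshold $a \in [0,\Delta(G)]$. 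As $\sigma(A_{\alpha}(F))$ is the union of the spectra of its components, the whole problem collapses to understanding $\sigma(A_{\alpha}(P_3))$.

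Next I would diagonalize $A_{\alpha}(P_3)$ using the reflection symmetry of the path. The antisymmetric vector $(1,0,-1)^{\top}$ is an eigenvector with eigenvalue $\alpha$, while on the symmetric subspace $A_{\alpha}(P_3)$ reduces to a $2\times 2$ matrix of trace $3\alpha$ and determinant $4\alpha-2$. Its eigenvalues are $\tfrac{3\alpha \pm \sqrt{9\alpha^2-16\alpha+8}}{2}$, so the three $A_{\alpha}(P_3)$-eigenvalues are
$$\frac{3\alpha+\sqrt{9\alpha^2-16\alpha+8}}{2},\qquad \alpha,\qquad \frac{3\alpha-\sqrt{9\alpha^2-16\alpha+8}}{2}.$$
A short verification, which reduces to the inequality $8(\alpha-1)^2 \geq 0$, shows that for $\alpha\in[1/2,1]$ the top eigenvalue strictly exceeds $\alpha$ and the bottom one is at most $\alpha$. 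Note that the top eigenvalue coincides exactly with the threshold $b := \tfrac{3\alpha+\sqrt{9\alpha^2-16\alpha+8}}{2}$ in the statement, which is precisely why that constant appears.

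For the first inequality I would observe that each copy of $P_3$ in $F$ contributes two eigenvalues that are $\geq \alpha$ (the top eigenvalue and $\alpha$ itself). By Lemma~\ref{cotaniki} all $A_{\alpha}(F)$-eigenvalues are $\leq \Delta(F)$, so these all lie in $[\alpha,\Delta(F)]$, giving $m_{A_{\alpha}(F)}[\alpha,\Delta(F)] \geq 2q = 2\lfloor n/3\rfloor$; the transfer inequality then delivers the claim for $G$. For the second inequality, each copy of $P_3$ contributes exactly one eigenvalue equal to $b$ (the other two being strictly smaller than $b$), whence $m_{A_{\alpha}(F)}[b,\Delta(F)] \geq q = \lfloor n/3\rfloor$, and Corollary~\ref{cor:domtreuil}(ii) again transports the count to $G$. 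Contributions from the leftover component $P_r$ can only increase these counts, and when $n<3$ both bounds are vacuous.

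The only genuinely delicate step is the eigenvalue computation for $P_3$ together with the sign verifications $\lambda_{\max}>\alpha$ and $\lambda_{\min}\leq\alpha$ across the entire range $\alpha\in[1/2,1]$; since these reduce to $8(\alpha-1)^2\geq 0$, I expect no real obstacle, and the main care required is simply to match the algebra so that the critical threshold lands exactly on the largest $A_{\alpha}(P_3)$-eigenvalue.
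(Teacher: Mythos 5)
Your proof is correct and follows essentially the same route as the paper: both extract from the Hamiltonian path a spanning forest containing $\left\lfloor n/3\right\rfloor$ disjoint $3$-vertex components (your $P_3$ is exactly the paper's $S_3$), identify its $A_{\alpha}$-eigenvalues, and transfer the counts to $G$ via Corollary~\ref{cor:domtreuil}. The only difference is cosmetic: you recompute the spectrum of $P_3$ by a symmetry argument, whereas the paper reads it off from Proposition~\ref{starspectrum} with $n=3$.
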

\begin{proof}
    Let $G$ be a graph of order $n$, and let $\alpha \in [1/2, 1]$. Suppose that $G$ has a Hamiltonian path, which 
    contains a spanning forest $F$ that includes at least $\left\lfloor \dfrac{n}{3} \right\rfloor$ disjoint copies of the star $S_3.$ From Proposition~\ref{starspectrum}, each star $S_3$ has two $A_{\alpha}(S_3)$-eigenvalues in the interval $[\alpha, 2]$
    and therefore, the sum of the multiplicities of the $A_{\alpha}(F)$-eigenvalues in the interval $[\alpha, 2]$ is equal to $2 \left\lfloor \dfrac{n}{3} \right\rfloor.$
    Moreover, from Corollary~\ref{cor:domtreuil}, we have
    $$
    m_{A_{\alpha}(G)}[\alpha, \Delta(G)] \geq m_{A_{\alpha}(F)}[\alpha, 2],
    $$
    and thus
    $$
    m_{A_{\alpha}(G)}[\alpha, \Delta(G)] \geq 2 \left\lfloor \dfrac{n}{3} \right\rfloor.
    $$
    Furthermore, each star $S_3$ also has an $A_{\alpha}(S_3)$-eigenvalue in the interval $[\lambda_1(A_{\alpha}(S_3)), 2]$, where $\lambda_1(A_{\alpha}(S_3)) = \dfrac{3\alpha + \sqrt{9\alpha^2-16\alpha+8}}{2}$. Therefore,
    $$
    m_{A_{\alpha}(G)}\left[\dfrac{3\alpha + \sqrt{9\alpha^2-16\alpha+8}}{2}, \Delta(G)\right] \geq \left\lfloor\dfrac{n}{3}\right\rfloor,
    $$
    and the result follows.
\end{proof}

For a graph $G$ and $\alpha \in [1/2, 1]$, the next theorem establishes a relation between the number of $A_{\alpha}(G)$-eigenvalues in the interval $[0, \alpha)$ and the domination number of $G$, $\gamma(G)$.

\begin{teo}\label{teo:hedtreuil}
If $G$ is a graph and $\alpha \in [1/2,1]$, then $$m_{A_{\alpha}(G)}[0, \alpha) \leq \gamma(G).$$
If $G \cong S_n$, then the equality holds.
\end{teo}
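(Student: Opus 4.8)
The plan is to bound $m_{A_{\alpha}(G)}[0,\alpha)$ from above by replacing $G$ with a spanning star forest having exactly $\gamma(G)$ components, and then counting eigenvalues one block at a time. First I would dispose of isolated vertices: writing $G = G' \cup sK_1$ with $G'$ free of isolated vertices, each isolated vertex contributes a single zero eigenvalue (which lies in $[0,\alpha)$ since $\alpha \geq 1/2 > 0$) and increases $\gamma$ by exactly one, so it suffices to establish the inequality for $G'$. Thus I assume $G$ has no isolated vertices and at least one edge, which guarantees $\alpha \leq 1 \leq \Delta(G)$ so that the choice $a = \alpha$ is admissible below.

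By Lemma~\ref{domtrelem6}, $G$ has a spanning star forest $F = S_{n_1} \cup \cdots \cup S_{n_{\gamma(G)}}$. Applying Corollary~\ref{cor:domtreuil}(i) with $a = \alpha$, and using that $A_{\alpha}(F)$ is block diagonal with blocks $A_{\alpha}(S_{n_i})$, yields
$$m_{A_{\alpha}(G)}[0,\alpha) \leq m_{A_{\alpha}(F)}[0,\alpha) = \sum_{i=1}^{\gamma(G)} m_{A_{\alpha}(S_{n_i})}[0,\alpha).$$
The heart of the argument is therefore the claim that each star contributes at most one eigenvalue to $[0,\alpha)$. By Proposition~\ref{starspectrum}, for $n_i \geq 2$ the spectrum of $A_{\alpha}(S_{n_i})$ consists of $\lambda_1 \geq \alpha$, the value $\alpha$ with multiplicity $n_i - 2$, and $\lambda_{n_i} \leq \alpha$; since the interval $[0,\alpha)$ excludes $\alpha$, only $\lambda_{n_i}$ can lie in it, so $m_{A_{\alpha}(S_{n_i})}[0,\alpha) \leq 1$. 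The degenerate cases $n_i = 1$ (a single vertex, eigenvalue $0$) and $n_i = 2$ (spectrum $\{1,\,2\alpha-1\}$) are checked directly and also give at most one. Summing over the $\gamma(G)$ stars gives $m_{A_{\alpha}(G)}[0,\alpha) \leq \gamma(G)$.

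For the equality assertion with $G \cong S_n$, note that $\gamma(S_n) = 1$, so I only need the single candidate $\lambda_n$ to actually fall inside $[0,\alpha)$. Proposition~\ref{prop:positive} gives $\lambda_n \geq 0$, and squaring reduces the strict inequality $\lambda_n < \alpha$, with $\lambda_n = \frac{1}{2}\bigl(\alpha n - \sqrt{\alpha^2 n^2 + 4(n-1)(1-2\alpha)}\bigr)$, to the condition $(1-\alpha)^2 > 0$. The main obstacle, and the point I would flag, is precisely this boundary: the reduction shows $\lambda_n < \alpha$ exactly when $\alpha \neq 1$, whereas at $\alpha = 1$ one has $A_1(S_n) = D(S_n)$ with spectrum $\{n-1,\,1^{(n-1)}\}$ and hence $m_{A_1(S_n)}[0,1) = 0 < 1$. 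The equality should therefore be read for $\alpha \in [1/2,1)$. Away from this endpoint the computation is routine; the only other point requiring care is the direction of Corollary~\ref{cor:domtreuil}, since it is the inequality $m_{A_{\alpha}(F)}[0,\alpha) \geq m_{A_{\alpha}(G)}[0,\alpha)$ that underlies the upper bound and relies on the monotonicity valid for $\alpha \in [1/2,1]$.
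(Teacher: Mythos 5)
Your proof is correct and follows essentially the same route as the paper: reduce to the spanning star forest of Lemma~\ref{domtrelem6}, apply the edge-removal monotonicity of Proposition~\ref{lem:domtreuil} (via Corollary~\ref{cor:domtreuil}), and count at most one eigenvalue per star block using Proposition~\ref{starspectrum}, with the same separate handling of isolated vertices. Your flag at the endpoint is moreover a genuine correction to the paper: the paper claims each star contributes \emph{exactly} one eigenvalue to $[0,\alpha)$ and that equality holds for $S_n$, but at $\alpha=1$ we have $A_1(S_n)=D(S_n)$ with spectrum $\{n-1,1^{(n-1)}\}$, so $m_{A_1(S_n)}[0,1)=0<1=\gamma(S_n)$ and the equality statement should indeed be restricted to $\alpha\in[1/2,1)$ as you observe.
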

\begin{proof}
Assume without loss of generality that $G$ has no isolated vertices. From Lemma~\ref{domtrelem6}, $G$ has a spanning forest $S_{n_1} \cup S_{n_2} \cup \dots \cup S_{n_{\gamma}(G)}$ disjoint by vertices. Let $F \cong \bigcup_{i=1}^{\gamma(G)} S_{n_i}.$ 
From Proposition~\ref{lem:domtreuil}, it follows that  
$$m_{A_{\alpha}(F)}[0,\alpha) \geq m_{A_{\alpha}(G)}[0,\alpha).$$  
As each star $S_{n_i}$ is a connected component of $F$, from Proposition~\ref{starspectrum}, $A_{\alpha}(S_{n_i})$ has exactly one eigenvalue in the interval $[0,\alpha)$. Then, the sum of the $A_{\alpha}(F)$-eigenvalues in the interval $[0, \alpha)$ is given by
$$m_{A_{\alpha}(F)}[0,\alpha) = \sum_{i=1}^{\gamma(G)} m_{A_{\alpha}(S_{n_i})}[0, \alpha) = \gamma(G).$$ If $G$ contains $k$  isolated vertices, consider $G'$ a subgraph of $G$ that contains no isolated vertices, whose domination number is $\gamma(G^{'}).$ It is easy to see that $m_{A_{\alpha}(G)}[0,\alpha) = m_{A_{\alpha}(G')}[0,\alpha) + k$ and $\gamma(G) = \gamma(G^{'}) + k.$ So, the result follows.  
If $G$ is the star, from Proposition~\ref{starspectrum} the equality is achieved.
\end{proof}

As $m_{A_{\alpha}(G)}[\alpha,\Delta(G)] = n - m_{A_{\alpha}(G)}[0,\alpha)$,  from Theorem~\ref{teo:hedtreuil} we obtain the following corollary.

\begin{cor}\label{cor:teo:hedtreuil}
Let $G$ be a graph of order $n$ and $\alpha \in [1/2,1]$. Then $$m_{A_{\alpha}(G)}[\alpha,\Delta(G)] \geq n - \gamma(G).$$
\end{cor}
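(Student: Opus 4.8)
The plan is to derive Corollary~\ref{cor:teo:hedtreuil} as a direct counting consequence of Theorem~\ref{teo:hedtreuil}, exploiting the fact that for $\alpha \in [1/2,1]$ every $A_{\alpha}(G)$-eigenvalue lies in the interval $[0,\Delta(G)]$. This confinement is exactly what was established at the start of Section~\ref{sec:main2} via Proposition~\ref{prop:positive} (positive semidefiniteness when $\alpha \ge 1/2$, giving $\lambda_i \ge 0$) and Lemma~\ref{cotaniki} (giving $\lambda_i \le \Delta(G)$). Hence all $n$ eigenvalues are accounted for by the two complementary subintervals $[0,\alpha)$ and $[\alpha,\Delta(G)]$.

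The key step is the partition identity
\begin{equation*}
m_{A_{\alpha}(G)}[0,\Delta(G)] = m_{A_{\alpha}(G)}[0,\alpha) + m_{A_{\alpha}(G)}[\alpha,\Delta(G)].
\end{equation*}
Since $\alpha \in [1/2,1]$ forces $0 \le \alpha \le \Delta(G)$ (for a graph with at least one edge, $\Delta(G) \ge 1 \ge \alpha$), the two subintervals $[0,\alpha)$ and $[\alpha,\Delta(G)]$ are disjoint and their union is $[0,\Delta(G)]$. Because every eigenvalue lies in $[0,\Delta(G)]$, the left-hand side equals $n$, which yields
\begin{equation*}
m_{A_{\alpha}(G)}[\alpha,\Delta(G)] = n - m_{A_{\alpha}(G)}[0,\alpha).
\end{equation*}

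The conclusion then follows by substituting the upper bound from Theorem~\ref{teo:hedtreuil}, namely $m_{A_{\alpha}(G)}[0,\alpha) \le \gamma(G)$, into this equality: subtracting a larger quantity yields a smaller result, so $m_{A_{\alpha}(G)}[\alpha,\Delta(G)] \ge n - \gamma(G)$. I expect no genuine obstacle here, since the argument is purely a complementary-counting step once the spectral confinement to $[0,\Delta(G)]$ is in hand; the only point requiring mild care is confirming that $\alpha \le \Delta(G)$ so that the interval $[\alpha,\Delta(G)]$ is nonempty and the two pieces genuinely partition $[0,\Delta(G)]$, which holds automatically for any graph possessing an edge (and the degenerate edgeless case, where $\gamma(G)=n$ and the bound is vacuous, can be dismissed immediately).
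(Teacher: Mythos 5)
Your proposal is correct and follows exactly the paper's own route: the paper derives this corollary from the identity $m_{A_{\alpha}(G)}[\alpha,\Delta(G)] = n - m_{A_{\alpha}(G)}[0,\alpha)$ (valid because Proposition~\ref{prop:positive} and Lemma~\ref{cotaniki} confine all eigenvalues to $[0,\Delta(G)]$ when $\alpha \in [1/2,1]$) combined with the bound $m_{A_{\alpha}(G)}[0,\alpha) \leq \gamma(G)$ from Theorem~\ref{teo:hedtreuil}. Your additional remark about the edgeless case is a minor refinement the paper omits, but it does not change the substance of the argument.
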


\subsection{Distribution involving $\beta(G)$ and $\nu(G)$}

\medskip

In \cite{GuoWuZhang2011}, the authors presented results for the distribution of $L(G)$-eigenvalues in terms of $\beta(G)$ and $q(G)$. The next theorem shows that the edge covering number of $G,$  $\beta(G),$ is a lower bound for the number of $A_{\alpha}(G)$-eigenvalues in the interval $[\alpha, \Delta(G)]$ and presents a sufficient condition for extremality based  the matching number, $\nu(G)$, and $q(G)$.  

\begin{teo}\label{teo:fanguil}
    If $G$ is a connected graph of order $n \geq 2$, and $\alpha \in [1/2,1]$, then 
    $$m_{A_{\alpha}(G)}[\alpha, \Delta(G)] \geq \beta(G).$$
    If $\nu(G) = q(G)$, then the equality holds.
\end{teo}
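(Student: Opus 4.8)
The plan is to reduce everything to Gallai's identity (Lemma~\ref{Galli1959}). Since $G$ is connected with $n\ge 2$, it has no isolated vertices, so $\beta(G)=n-\nu(G)$; and because every $A_\alpha(G)$-eigenvalue lies in $[0,\Delta(G)]$ for $\alpha\in[1/2,1]$, we have $m_{A_\alpha(G)}[\alpha,\Delta(G)]=n-m_{A_\alpha(G)}[0,\alpha)$. Thus the asserted inequality $m_{A_\alpha(G)}[\alpha,\Delta(G)]\ge\beta(G)$ is equivalent to $m_{A_\alpha(G)}[0,\alpha)\le\nu(G)$, and the equality case is equivalent to $m_{A_\alpha(G)}[0,\alpha)=\nu(G)$.

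For the lower bound I would exhibit a convenient spanning subgraph. A minimum edge cover $F$ of $G$ is acyclic (deleting an edge of a cycle keeps it a cover) and contains no path on four vertices (deleting the middle edge keeps it a cover), so each component of $F$ is a star on at least two vertices; counting edges in this forest gives exactly $n-\beta(G)=\nu(G)$ stars $S_{n_1},\dots,S_{n_{\nu}}$ with $\sum_i n_i=n$. Since $F$ is a spanning subgraph of $G$, Proposition~\ref{lem:domtreuil}(\ref{lem:domtreuilitem2}) (equivalently Corollary~\ref{cor:domtreuil}) yields $m_{A_\alpha(G)}[\alpha,\Delta(G)]\ge m_{A_\alpha(F)}[\alpha,\Delta(F)]$. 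As $A_\alpha(F)$ is the direct sum of the blocks $A_\alpha(S_{n_i})$, Proposition~\ref{starspectrum} gives in each block the eigenvalue $\alpha$ with multiplicity $n_i-2$ together with $\lambda_1(A_\alpha(S_{n_i}))\ge\alpha$ (and $\le\Delta(S_{n_i})$ by Lemma~\ref{cotaniki}), so each block contributes at least $n_i-1$ eigenvalues to $[\alpha,\Delta(F)]$. Summing over the $\nu(G)$ stars gives $m_{A_\alpha(F)}[\alpha,\Delta(F)]\ge\sum_i(n_i-1)=n-\nu(G)=\beta(G)$, proving the bound.

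For the equality under $\nu(G)=q(G)$ I would first record that $\nu(G)\ge q(G)$ always holds: choosing for each quasi-pendant one of its pendant neighbours produces $q(G)$ pairwise disjoint edges (a pendant has a unique neighbour, so distinct quasi-pendants receive distinct pendants), i.e.\ a matching of size $q(G)$. By the equivalence above it then suffices to prove $m_{A_\alpha(G)}[0,\alpha)\ge q(G)$, since this forces $m_{A_\alpha(G)}[0,\alpha)\ge q(G)=\nu(G)$ and hence $m_{A_\alpha(G)}[\alpha,\Delta(G)]\le n-\nu(G)=\beta(G)$, matching the lower bound. To this end let $w_1,\dots,w_q$ be the quasi-pendants with private pendant neighbours $u_1,\dots,u_q$, and let $B$ be the principal submatrix of $A_\alpha(G)$ on $\{u_1,\dots,u_q,w_1,\dots,w_q\}$. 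Because each $u_j$ has degree $1$ and is adjacent only to $w_j$, we obtain the block form $B=\left[\begin{smallmatrix}\alpha I_q & (1-\alpha)I_q\\ (1-\alpha)I_q & C\end{smallmatrix}\right]$ for a symmetric matrix $C$. For $\alpha\in[1/2,1)$ the block $(1-\alpha)I_q$ is invertible, and the congruence by $\left[\begin{smallmatrix} I_q & R\\ 0 & I_q\end{smallmatrix}\right]$ with $R=-C/(2(1-\alpha))$ transforms $B-\alpha I_{2q}$ into $\left[\begin{smallmatrix}0 & (1-\alpha)I_q\\ (1-\alpha)I_q & 0\end{smallmatrix}\right]$, whose inertia is $(q,q,0)$; hence $B$ has exactly $q$ eigenvalues below $\alpha$. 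Interlacing (Theorem~\ref{interlacing:matrizes}) then forces $A_\alpha(G)$ to have at least $q$ eigenvalues below $\alpha$, so $m_{A_\alpha(G)}[0,\alpha)\ge q(G)$.

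I expect the inertia computation for $B-\alpha I_{2q}$ to be the main obstacle, both in presenting the congruence cleanly and in that it genuinely requires $\alpha\ne 1$, where the off-diagonal block becomes singular. Indeed, at $\alpha=1$ one has $A_1(G)=D(G)$, whose eigenvalues are the (positive) degrees, so $m_{A_1(G)}[\alpha,\Delta(G)]=n>\beta(G)$ and equality fails. I would therefore prove the lower bound for all $\alpha\in[1/2,1]$ but state the extremal part for $\alpha\in[1/2,1)$, treating $\alpha=1$ as the separate degenerate case in which only the strict inequality holds.
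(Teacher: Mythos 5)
Your proposal is correct in substance, but it takes a genuinely different route from the paper, and it also repairs a real defect in the statement. For the lower bound, the paper first reduces to a spanning tree (using $\beta(G)\le\beta(T)$) and then runs an induction on the order of the tree, deleting an edge $e$ with $\beta(T-e)=\beta(T)$ and invoking Proposition~\ref{lem:domtreuil}; you instead observe that a minimum edge cover is a spanning star forest with exactly $\nu(G)$ components and count eigenvalues star by star via Proposition~\ref{starspectrum}. This avoids the induction entirely and is arguably cleaner (the paper uses the star-forest device only later, in Theorem~\ref{teo:hedtreuil}, with the forest coming from a dominating set rather than from an edge cover). For the equality case, both arguments share the same counting skeleton ($n=m_{A_\alpha(G)}[0,\alpha)+m_{A_\alpha(G)}[\alpha,\Delta(G)]$ plus Gallai's identity), but they diverge on the key inequality $m_{A_\alpha(G)}[0,\alpha)\ge q(G)$: the paper cites Remark~\ref{obsquasi01}, whose hypothesis (every internal vertex is quasipendant) need not hold under $\nu(G)=q(G)$ --- $P_5$ is already a counterexample, and it is one of the paper's own equality examples --- whereas you prove the inequality from scratch by a congruence/inertia computation on the pendant--quasipendant pairs followed by interlacing. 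Your version is self-contained, and it makes the role of $\alpha\neq 1$ explicit: your observation that equality \emph{fails} at $\alpha=1$ is a genuine correction, since $A_1(G)=D(G)$ has all eigenvalues equal to the (positive) degrees, so $m_{A_1(G)}[1,\Delta(G)]=n>\beta(G)$ for every connected $G$ of order at least $2$ (e.g.\ $S_n$ or $P_4$, both of which satisfy $\nu=q$). The paper's equality proof silently breaks at exactly this point, because the chain of results behind Remark~\ref{obsquasi01} is only valid for $\alpha\in[0,1)$.

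Two small slips to repair, neither of which damages the argument. First, the congruence matrix should use $R=-(C-\alpha I_q)/\bigl(2(1-\alpha)\bigr)$ rather than $R=-C/\bigl(2(1-\alpha)\bigr)$; with your choice of $R$ the bottom-right block of the transformed matrix is $-\alpha I_q$ instead of $0$, but in either case each $2\times 2$ block has negative determinant, so the inertia is $(q,q,0)$ and the conclusion stands. Second, the side remark that $\nu(G)\ge q(G)$ ``always holds'' is false for $K_2$, where both vertices are quasipendant and $q=2>1=\nu$; you need $n\ge 3$ (so that, in a connected graph, no pendant vertex is itself quasipendant), which is also exactly what guarantees that the sets $\{u_1,\dots,u_q\}$ and $\{w_1,\dots,w_q\}$ are disjoint and hence that $B$ genuinely has order $2q$. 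Since the hypothesis $\nu(G)=q(G)$ cannot hold for $K_2$, nothing is lost, but the caveat should be stated.
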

\begin{proof}
    Let $G$ be a connected graph and $\alpha \in [1/2, 1]$. It is easy to see that $\beta(G) \leq \beta(T)$ for any spanning tree $T$ of $G$. So, it is enough to prove the result for $T$. 
    The proof is by induction on the order of $T$. For $n = 2$ the result follows from Proposition~\ref{starspectrum}.  
    Now, suppose that $m_{A_{\alpha}(T)}[\alpha, \Delta(T)] \geq \beta(T)$ for all spanning trees of $G$ of order up to $k \geq 3$. Let $T$ be a spanning tree of $G$ of order $k+1$. If $T \cong S_{k+1}$, from Proposition~\ref{starspectrum} we have
    $$
    m_{A_{\alpha}(T)}[\alpha, \Delta(T)) = k = \beta(T),
    $$  
    and the result follows. Now suppose $T \ncong S_{k+1}$. Then, for $k \geq 3$, $\nu(T) \geq 2$. From Lemma~\ref{Galli1959}, we have that $\beta(T) + \nu(T) = k+1$, that is, 
    $$
    \beta(T) = k+1 - \nu(T).
    $$  
    Since $\nu(T) \geq 2$, we have that $\beta(T) \leq k-1$. As any edge cover of $T$ must contain all its pendant edges, there exists an edge $e\in E(T)$ such that $\beta(T-e) = \beta(T)$, whose removal results in two connected components of $T$, $T_1$ and $T_2$, both with at least two vertices. Applying the induction hypothesis and from Proposition~\ref{lem:domtreuil}, we have  
    $$
    \begin{aligned}
    m_{A_{\alpha}(T)}[\alpha, \Delta(T)] &\geq m_{A_{\alpha}(T-e)}[\alpha, \Delta(T-e)] \\
    &= m_{A_{\alpha}(T_1)}[\alpha, \Delta(T_1)] + m_{A_{\alpha}(T_2)}[\alpha, \Delta(T_2)] \\
    &\geq \beta(T_1) + \beta(T_2) \\
    &= \beta(T-e) \\
    &= \beta(T),
    \end{aligned}
    $$  
    and the result follows. Now suppose $\nu(G) = q(G)$. FromLemma~\ref{Galli1959}, we have $n = \beta(G) + \nu(G)$, which implies that  
    $\beta(G) + q(G) = n$  
    %Since $m_{A_{\alpha}(G)}[\alpha, \Delta(G)] \geq \beta(G)$, 
    and from Remark~\ref{obsquasi01}, we have that $m_{A_{\alpha}(G)}[0,\alpha) \geq q(G)$. Then  
    $$
    \begin{aligned}
    n &= \beta(G) + \nu(G) \\
    &= \beta(G) + q(G) \\
    &\leq m_{A_{\alpha}(G)}[\alpha, \Delta(G)] + m_{A_{\alpha}(G)}[0,\alpha) =n
    \end{aligned}
    $$  
    Therefore, $\beta(G) = m_{A_{\alpha}(G)}[\alpha, \Delta(G)]$ and $q(G) = m_{A_{\alpha}(G)}[0,\alpha)$, which concludes the proof. 
\end{proof}

As $\beta(G) \geq p(G)$, the bound obtained in Theorem~\ref{teo:fanguil} is better than the bound obtained in Theorem~\ref{org:dist1}.

Given an integer $n \geq 6$, we define the set $\mathcal{P}_{\theta}$, where $\theta = n + \lfloor\frac{n}{2}\rfloor - 2$, as the set of all graphs obtained from the path $P_n$ by adding $\lfloor\frac{n}{2}\rfloor - 2$ pendant vertices to internal vertices of $P_n$ that are not quasipendant. It is worth noting that the graphs in this family have order $\theta$. In Figure \ref{fig:pnnu}, we illustrate all graphs in the set $\mathcal{P}_{10}$, where $n=8$.

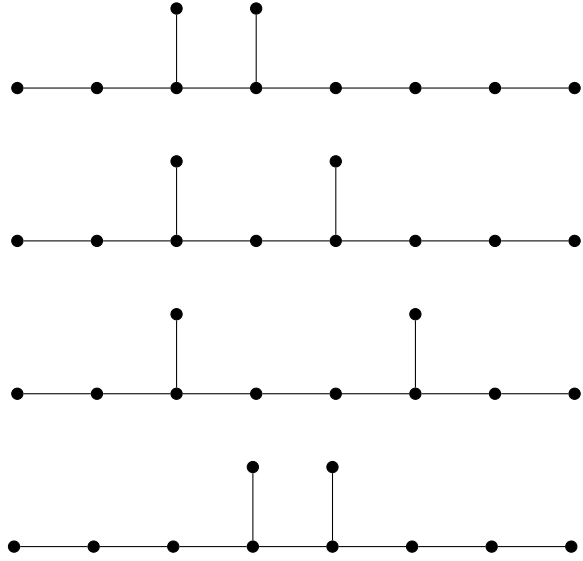
\begin{figure}[H]
    \centering
    \begin{tikzpicture}[>=latex',join=bevel,scale=1.5pt]
	\tikzstyle{selected edge} = [draw,line width=1pt,->,red!30]

	\node (1) at (0bp,0bp) [draw,circle,inner sep=1.5pt,fill=black!100,label=90:{{}}] {};
	\node (2) at (20bp,0bp) [draw,circle,inner sep=1.5pt,fill=black!100,label=90:{{}}] {};
	\node (3) at (40bp,0bp) [draw,circle,inner sep=1.5pt,fill=black!100,label=90:{{}}] {};
    \node (4) at (60bp,0bp) [draw,circle,inner sep=1.5pt,fill=black!100,label=90:{{}}] {};
	\node (5) at (80bp,0bp) [draw,circle,inner sep=1.5pt,fill=black!100,label=90:{{}}] {};
	\node (6) at (100bp,0bp) [draw,circle,inner sep=1.5pt,fill=black!100,label=90:{{}}] {};
    \node (7) at (120bp,0bp) [draw,circle,inner sep=1.5pt,fill=black!100,label=90:{{}}] {};
    \node (8) at (140bp,0bp) [draw,circle,inner sep=1.5pt,fill=black!100,label=90:{{}}] {};
    \node (11) at (40bp,20bp) [draw,circle,inner sep=1.5pt,fill=black!100,label=90:{{}}] {};
    \node (12) at (60bp,20bp) [draw,circle,inner sep=1.5pt,fill=black!100,label=90:{{}}] {};
    
	%\node (X) at (52bp,-35bp)[label=90:{{}}] {$\mathcal{H}_3^2$};
	%++++++++arestas++++	
 
	\draw [] (1) -- node {} (2);
    \draw [] (2) -- node {} (3);
    \draw [] (3) -- node {} (4);
    \draw [] (4) -- node {} (5);
    \draw [] (5) -- node {} (6);
    \draw [] (6) -- node {} (7);
    \draw [] (7) -- node {} (8);
    \draw [] (11) -- node {} (3);
    \draw [] (12) -- node {} (4);
	\end{tikzpicture}
    
    \vspace{0.5cm}
    \begin{tikzpicture}[>=latex',join=bevel,scale=1.5pt]
	\tikzstyle{selected edge} = [draw,line width=1pt,->,red!30]

	\node (1) at (0bp,0bp) [draw,circle,inner sep=1.5pt,fill=black!100,label=90:{{}}] {};
	\node (2) at (20bp,0bp) [draw,circle,inner sep=1.5pt,fill=black!100,label=90:{{}}] {};
	\node (3) at (40bp,0bp) [draw,circle,inner sep=1.5pt,fill=black!100,label=90:{{}}] {};
    \node (4) at (60bp,0bp) [draw,circle,inner sep=1.5pt,fill=black!100,label=90:{{}}] {};
	\node (5) at (80bp,0bp) [draw,circle,inner sep=1.5pt,fill=black!100,label=90:{{}}] {};
	\node (6) at (100bp,0bp) [draw,circle,inner sep=1.5pt,fill=black!100,label=90:{{}}] {};
    \node (7) at (120bp,0bp) [draw,circle,inner sep=1.5pt,fill=black!100,label=90:{{}}] {};
    \node (8) at (140bp,0bp) [draw,circle,inner sep=1.5pt,fill=black!100,label=90:{{}}] {};
    \node (11) at (40bp,20bp) [draw,circle,inner sep=1.5pt,fill=black!100,label=90:{{}}] {};
    \node (12) at (80bp,20bp) [draw,circle,inner sep=1.5pt,fill=black!100,label=90:{{}}] {};
    
	%\node (X) at (52bp,-35bp)[label=90:{{}}] {$\mathcal{H}_3^2$};
	%++++++++arestas++++	
	
	\draw [] (1) -- node {} (2);
    \draw [] (2) -- node {} (3);
    \draw [] (3) -- node {} (4);
    \draw [] (4) -- node {} (5);
    \draw [] (5) -- node {} (6);
    \draw [] (6) -- node {} (7);
    \draw [] (7) -- node {} (8);
    \draw [] (11) -- node {} (3);
    \draw [] (12) -- node {} (5);
	\end{tikzpicture}

    \vspace{0.5cm}
    \begin{tikzpicture}[>=latex',join=bevel,scale=1.5pt]
	\tikzstyle{selected edge} = [draw,line width=1pt,->,red!30]

	\node (1) at (0bp,0bp) [draw,circle,inner sep=1.5pt,fill=black!100,label=90:{{}}] {};
	\node (2) at (20bp,0bp) [draw,circle,inner sep=1.5pt,fill=black!100,label=90:{{}}] {};
	\node (3) at (40bp,0bp) [draw,circle,inner sep=1.5pt,fill=black!100,label=90:{{}}] {};
    \node (4) at (60bp,0bp) [draw,circle,inner sep=1.5pt,fill=black!100,label=90:{{}}] {};
	\node (5) at (80bp,0bp) [draw,circle,inner sep=1.5pt,fill=black!100,label=90:{{}}] {};
	\node (6) at (100bp,0bp) [draw,circle,inner sep=1.5pt,fill=black!100,label=90:{{}}] {};
    \node (7) at (120bp,0bp) [draw,circle,inner sep=1.5pt,fill=black!100,label=90:{{}}] {};
    \node (8) at (140bp,0bp) [draw,circle,inner sep=1.5pt,fill=black!100,label=90:{{}}] {};
    \node (11) at (40bp,20bp) [draw,circle,inner sep=1.5pt,fill=black!100,label=90:{{}}] {};
    \node (12) at (100bp,20bp) [draw,circle,inner sep=1.5pt,fill=black!100,label=90:{{}}] {};
    
	%\node (X) at (52bp,-35bp)[label=90:{{}}] {$\mathcal{H}_3^2$};
	%++++++++arestas++++	
	
	\draw [] (1) -- node {} (2);
    \draw [] (2) -- node {} (3);
    \draw [] (3) -- node {} (4);
    \draw [] (4) -- node {} (5);
    \draw [] (5) -- node {} (6);
    \draw [] (6) -- node {} (7);
    \draw [] (7) -- node {} (8);
    \draw [] (11) -- node {} (3);
    \draw [] (12) -- node {} (6);
	\end{tikzpicture}

    \vspace{0.5cm}
    \begin{tikzpicture}[>=latex',join=bevel,scale=1.5pt]
	\tikzstyle{selected edge} = [draw,line width=1pt,->,red!30]

	\node (1) at (0bp,0bp) [draw,circle,inner sep=1.5pt,fill=black!100,label=90:{{}}] {};
	\node (2) at (20bp,0bp) [draw,circle,inner sep=1.5pt,fill=black!100,label=90:{{}}] {};
	\node (3) at (40bp,0bp) [draw,circle,inner sep=1.5pt,fill=black!100,label=90:{{}}] {};
    \node (4) at (60bp,0bp) [draw,circle,inner sep=1.5pt,fill=black!100,label=90:{{}}] {};
	\node (5) at (80bp,0bp) [draw,circle,inner sep=1.5pt,fill=black!100,label=90:{{}}] {};
	\node (6) at (100bp,0bp) [draw,circle,inner sep=1.5pt,fill=black!100,label=90:{{}}] {};
    \node (7) at (120bp,0bp) [draw,circle,inner sep=1.5pt,fill=black!100,label=90:{{}}] {};
    \node (8) at (140bp,0bp) [draw,circle,inner sep=1.5pt,fill=black!100,label=90:{{}}] {};
    \node (11) at (60bp,20bp) [draw,circle,inner sep=1.5pt,fill=black!100,label=90:{{}}] {};
    \node (12) at (80bp,20bp) [draw,circle,inner sep=1.5pt,fill=black!100,label=90:{{}}] {};
    
	%\node (X) at (52bp,-35bp)[label=90:{{}}] {$\mathcal{H}_3^2$};
	%++++++++arestas++++	
	
	\draw [] (1) -- node {} (2);
    \draw [] (2) -- node {} (3);
    \draw [] (3) -- node {} (4);
    \draw [] (4) -- node {} (5);
    \draw [] (5) -- node {} (6);
    \draw [] (6) -- node {} (7);
    \draw [] (7) -- node {} (8);
    \draw [] (11) -- node {} (4);
    \draw [] (12) -- node {} (5);
	\end{tikzpicture}
   \caption{The set $\mathcal{P}_{10}$} \label{fig:pnnu}
\end{figure}

For $\alpha \in [1/2, 1]$, Corollary \ref{coremp} provides an upper bound for the number of $A_{\alpha}(G)$-eigenvalues in the interval $[0, \alpha)$, and presents graphs that attain the equality. The graphs shown in this corollary also attain equality in Theorem \ref{teo:fanguil}.

\begin{cor}\label{coremp}
    If $G$ is a connected graph of order $n \geq 2$ and $\alpha \in [1/2,1]$, then
    $$q(G) \leq m_{A_{\alpha}(G)}[0, \alpha) \leq \nu(G).$$
    Both equalities hold when $G$ is a graph in the set $\{S_n, H \circ K_1, P_4, P_5\} \cup \mathcal{P}_{\theta}$, for $\theta \geq 7$, where $H$ is any connected graph.
\end{cor}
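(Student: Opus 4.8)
The plan is to prove the two bounds in $q(G)\le m_{A_\alpha(G)}[0,\alpha)\le\nu(G)$ separately and then read off the extremal families. The right-hand inequality is essentially a restatement of Theorem~\ref{teo:fanguil}; the left-hand one is an interlacing statement extracted from the pendant--quasipendant structure; and the equality cases will follow---for most of the listed graphs---from the elementary observation that the chain pinches whenever $q(G)=\nu(G)$.

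For the upper bound I would first note that, for $\alpha\in[1/2,1]$, Proposition~\ref{prop:positive} and Lemma~\ref{cotaniki} confine every $A_\alpha(G)$-eigenvalue to $[0,\Delta(G)]$, so that
$$m_{A_\alpha(G)}[0,\alpha)+m_{A_\alpha(G)}[\alpha,\Delta(G)]=n.$$
Theorem~\ref{teo:fanguil} gives $m_{A_\alpha(G)}[\alpha,\Delta(G)]\ge\beta(G)$, hence $m_{A_\alpha(G)}[0,\alpha)\le n-\beta(G)$. As a connected graph of order $n\ge 2$ has no isolated vertices, Gallai's identity (Lemma~\ref{Galli1959}) gives $n-\beta(G)=\nu(G)$, which is the desired bound.

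For the lower bound I would argue by interlacing, taking $\alpha\in[1/2,1)$ so that the coupling constant $1-\alpha$ is nonzero. By Proposition~\ref{prop:positive} the matrix $A_\alpha(G)$ is positive semidefinite, so $m_{A_\alpha(G)}[0,\alpha)$ equals the number of negative eigenvalues of $M=A_\alpha(G)-\alpha I_n$. Assigning to each of the $q(G)$ quasipendant vertices a distinct pendant neighbour yields $q(G)$ vertex-disjoint pairs; ordering the pendants first, the principal submatrix $B$ of $M$ on these $2q(G)$ vertices takes the form $\left(\begin{smallmatrix}0&(1-\alpha)I\\(1-\alpha)I&C\end{smallmatrix}\right)$, since each chosen pendant has degree one (diagonal entry $\alpha(1-1)=0$) and is adjacent only to its quasipendant. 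A congruence clearing the block $C$ shows that $B$ is congruent to $\left(\begin{smallmatrix}0&(1-\alpha)I\\(1-\alpha)I&0\end{smallmatrix}\right)$, whose inertia is $(q(G),q(G),0)$; by Sylvester's law $B$ has exactly $q(G)$ negative eigenvalues, and Theorem~\ref{interlacing:matrizes} then forces $M$ to have at least $q(G)$ of them, i.e.\ $m_{A_\alpha(G)}[0,\alpha)\ge q(G)$.

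The extremal analysis rests on the remark that if $q(G)=\nu(G)$ then the chain $q(G)\le m_{A_\alpha(G)}[0,\alpha)\le\nu(G)$ pinches, so both equalities hold with no further computation. I would verify $q(G)=\nu(G)$ by hand: for $S_n$ both are $1$; for the corona $H\circ K_1$ both equal $|V(H)|$, since every vertex of $H$ is quasipendant and the pendant edges form a perfect matching; and for $P_4$ and $P_5$ both equal $2$. The family $\mathcal{P}_\theta$ is the main obstacle, because there $q(G)$ and $\nu(G)$ need not coincide, so the pinching argument is unavailable and $m_{A_\alpha(G)}[0,\alpha)$ must be computed directly. For this I would delete the pendant edges to expose a spanning star forest, evaluate its spectrum through Proposition~\ref{starspectrum}, and transfer the count back to $G$ using the interlacing inequalities of Proposition~\ref{lem:domtreuil} and Corollary~\ref{cor:domtreuil}; making this count tight uniformly in $\theta$ (and reconciling it with the quantity against which equality is claimed) is the delicate step of the whole proof.
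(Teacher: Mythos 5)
Your proof of the upper bound $m_{A_{\alpha}(G)}[0,\alpha)\leq\nu(G)$ is exactly the paper's argument: Theorem~\ref{teo:fanguil}, the identity $n=m_{A_{\alpha}(G)}[0,\alpha)+m_{A_{\alpha}(G)}[\alpha,\Delta(G)]$, and Gallai's identity (Lemma~\ref{Galli1959}). Likewise, your treatment of the extremal graphs $S_n$, $H\circ K_1$, $P_4$, $P_5$ by the pinching observation $q(G)=\nu(G)$ is precisely what the paper does.

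Your lower bound, however, takes a genuinely different and in fact sounder route. The paper simply cites Remark~\ref{obsquasi01}, whose hypothesis (every internal vertex is quasipendant) is not satisfied by a general connected graph, so the paper's justification of $q(G)\leq m_{A_{\alpha}(G)}[0,\alpha)$ is loose; its other tools only give the closed-interval bound or the bound $q(G)-m_{N_{\alpha}}(\alpha)$. Your argument --- pair each quasipendant with a private pendant neighbour, pass to the principal submatrix of $A_{\alpha}(G)-\alpha I$ on these $2q(G)$ vertices, clear the quasipendant block by a congruence, and combine Sylvester's law with Theorem~\ref{interlacing:matrizes} --- is self-contained and correct, and repairs this weakness. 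Note that your two implicit restrictions mark exactly where the stated corollary is actually false: the pairing needs pendant and quasipendant sets to be disjoint, which fails only for $P_2$ (where indeed $q(P_2)=2$ but $m_{A_{1/2}(P_2)}[0,1/2)=1$, so the hypothesis $n\geq 2$ is too generous), and excluding $\alpha=1$ is unavoidable, since $m_{A_1(G)}[0,1)$ counts isolated vertices and is $0$ even for $S_n$, where $q(S_n)=1$.

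The genuine gap in your proposal is the family $\mathcal{P}_{\theta}$, which you leave as an unfinished ``delicate step.'' But your suspicion that $q(G)\neq\nu(G)$ can occur in $\mathcal{P}_{\theta}$ is correct, and it does more than block the pinching argument: it refutes the claim. Take $n=8$ and attach the two pendants $u_3,u_4$ at the \emph{adjacent} vertices $v_3,v_4$ (the first graph of Figure~\ref{fig:pnnu}): then $q(G)=4$, while $\{v_1v_2,\,u_3v_3,\,u_4v_4,\,v_5v_6,\,v_7v_8\}$ is a perfect matching, so $\nu(G)=5$. Since both equalities together force $q(G)=\nu(G)$, they cannot both hold for this graph, no matter what $m_{A_{\alpha}(G)}[0,\alpha)$ turns out to be (a direct computation gives $m_{A_{1/2}(G)}[0,1/2)=4$, so it is the right-hand equality that fails). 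The paper's own proof disposes of $\mathcal{P}_{\theta}$ with the assertion that ``by construction, $\nu(G)=q(G)$,'' which is false for this member and for the one with pendants at $v_3,v_6$; it holds only for those members in which no two consecutive internal vertices of the path are both left without a pendant. So where your proof is incomplete, the paper's proof is incorrect: the honest resolution is not the direct spectral count you sketch (which cannot succeed), but restricting $\mathcal{P}_{\theta}$ to its members satisfying $q=\nu$.
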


\begin{proof}
From Remark~\ref{obsquasi01}, we have
$$
q(G) \leq m_{A_{\alpha}(G)}[0, \alpha).
$$
For each $\alpha \in [1/2, 1]$ we have
$$
n = m_{A_{\alpha}(G)}[0, \alpha) + m_{A_{\alpha}(G)}[\alpha, \Delta(G)].
$$
From Theorem~\ref{teo:fanguil}, we obtain
$$
\begin{aligned}
n = m_{A_{\alpha}(G)}[0, \alpha) + m_{A_{\alpha}(G)}[\alpha, \Delta(G)] \geq m_{A_{\alpha}(G)}[0, \alpha) + \beta(G) &\Leftrightarrow\\
    m_{A_{\alpha}(G)}[0, \alpha) + \beta(G) \leq n &\Leftrightarrow \\
    m_{A_{\alpha}(G)}[0, \alpha) \leq n - \beta(G)&,
\end{aligned}
$$
and from Lemma~\ref{Galli1959}, we conclude that
$$
m_{A_{\alpha}(G)}[0, \alpha) \leq \nu(G).
$$
Since $q(S_n) = \nu(S_n)$, the equalities are attained for $G \cong S_n$. As the corona graph $H\circ K_1$ has a perfect matching formed by its pendant edges and therefore, $q(H\circ K_1) = \nu(H\circ K_1)$, for any connected graph $H$. So, the equality is achieved. Now, consider the paths $P_4$ and $P_5$ in which $q(P_4) = \nu(P_4) = q(P_5) = \nu(P_5) = 2$, and the equalities are also attained. Let $G$ be a graph in the set $\mathcal{P}_{\theta}$. It is sufficient to note that, by construction, $\nu(G) = q(G)$, which concludes the proof.
\end{proof}

\begin{obs}
    In the proof of Corollary \ref{coremp}, we observe that a sufficient condition for a graph $G$ to attain the equalities is that $q(G) = \nu(G)$. Thus, if $G$ is a graph of order $n$, the generalized corona $G \tilde{\circ}\Lambda_{i=1}^{n}r_iK_1$, where $r_i \geq 1$, also attains the equality in the Corollary \ref{coremp}.%, and we note that the star and the corona $H \circ K_1$ are particular cases.
\end{obs}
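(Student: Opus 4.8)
The plan is to establish the two assertions in turn. For the first, I would read off the sufficient condition directly from the sandwich inequality already proved in Corollary~\ref{coremp}, namely $q(G)\le m_{A_{\alpha}(G)}[0,\alpha)\le \nu(G)$ for every $\alpha\in[1/2,1]$. If $q(G)=\nu(G)$, then the two outer quantities coincide, so $m_{A_{\alpha}(G)}[0,\alpha)$ is forced to equal their common value; hence both inequalities in Corollary~\ref{coremp} become equalities. This is exactly the mechanism already used there for the listed extremal graphs, now isolated as a standalone criterion.

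For the second assertion, set $\Gamma=G\tilde{\circ}\Lambda_{i=1}^{n}r_iK_1$ with $G$ connected of order $n\ge 2$, so that $\Gamma$ is connected and of order $n+\sum_{i=1}^{n}r_i\ge 2$, making Corollary~\ref{coremp} applicable. By the first assertion it suffices to verify $q(\Gamma)=\nu(\Gamma)$. First I would identify the pendant and quasi-pendant vertices. By the definition of the generalized corona, each vertex of the copy of $r_iK_1$ attached at the $i$-th vertex of $G$ is joined only to that vertex and to no other (since $r_iK_1$ is edgeless), so it has degree $1$ in $\Gamma$; these $\sum_{i=1}^{n}r_i$ vertices are precisely the pendants of $\Gamma$. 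Because $G$ is connected with $n\ge 2$, every vertex $i$ of $G$ has $d_{\Gamma}(i)=d_G(i)+r_i\ge 1+1=2$, hence is internal, and being adjacent to at least one pendant it is quasi-pendant; conversely no pendant vertex is quasi-pendant, its only neighbour being an internal vertex of $G$. Thus the quasi-pendant vertices are exactly the $n$ vertices of $G$, giving $q(\Gamma)=n$.

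Next I would compute $\nu(\Gamma)$. Selecting for each $i$ a single pendant edge joining the $i$-th vertex of $G$ to one of its attached leaves produces a matching of size $n$, so $\nu(\Gamma)\ge n$. For the matching upper bound, the key observation is that every edge of $\Gamma$ is incident to a vertex of $V(G)$: edges inherited from $G$ have both endpoints there, each join-edge has exactly one endpoint there, and $r_iK_1$ contributes no edges. Assigning to each edge of an arbitrary matching one of its endpoints in $V(G)$ therefore defines an injection into $V(G)$, so no matching exceeds $n$ edges and $\nu(\Gamma)\le n$. Combining the two bounds gives $\nu(\Gamma)=n=q(\Gamma)$, and the first part then shows that $\Gamma$ attains the equalities in Corollary~\ref{coremp}.

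The calculations are elementary; the only delicate point, and the step I would treat most carefully, is the identification $q(\Gamma)=n$. This count can fail if some vertex of $G$ is itself pendant in $\Gamma$, since then the leaf attached to it would also become quasi-pendant and inflate $q(\Gamma)$. This is excluded exactly by the hypothesis that $G$ is connected with $n\ge 2$, which forces $d_G(i)\ge 1$ and hence $d_{\Gamma}(i)\ge 2$ for every $i$; I would flag the degenerate case $G\cong K_1$ with $r_1=1$, giving $\Gamma\cong P_2$, as lying outside this regime.
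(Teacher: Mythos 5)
Your proposal is correct and follows essentially the same route as the paper: the remark is justified precisely by the sufficient condition $q = \nu$ read off from the sandwich $q(G) \le m_{A_{\alpha}(G)}[0,\alpha) \le \nu(G)$ in Corollary~\ref{coremp}, together with the verification (which the paper asserts without detail and you carry out in full) that $q(\Gamma) = \nu(\Gamma) = n$ for $\Gamma = G\,\tilde{\circ}\,\Lambda_{i=1}^{n} r_i K_1$. Your added caveat is a genuine refinement of the paper's looser phrasing ``if $G$ is a graph of order $n$'': one does need $G$ connected (or at least no isolated vertex of $G$ with $r_i = 1$, which would create a $K_2$ component where a leaf becomes quasi-pendant and $q$ exceeds $\nu$) both for the count $q(\Gamma) = n$ and for Corollary~\ref{coremp} to be applicable to $\Gamma$.
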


\section*{Acknowledgments} 

\textbf{Funding information:} 
The authors thank CEFET/RJ – Centro Federal de Educação Tecnológica Celso Suckow da Fonseca for institutional support. They also acknowledge partial financial support from the Fundação Carlos Chagas Filho de Amparo à Pesquisa do Estado do Rio de Janeiro (FAPERJ), Process SEI 260003/001228/2023 and Grant E-20/2022-284573, and from Conselho Nacional de Desenvolvimento Científico e Tecnológico (CNPq), Grant 405552/2023-8.

\subsection*{Conflict of interest}

Authors state no conflict of interest.

\section*{Data availability}
		
Data sharing not applicable to this article as no datasets were generated or analyzed during the current study.

\bibliographystyle{plain} 
\bibliography{references.bib}

\end{document}